\let\xx@thm\@thm
\newcommand\report[1]{#1}
\newcommand\paper[1]{#1}
\renewcommand\paper[1]{}
\renewcommand{\epsilon}{\varepsilon}
\let\oldphi\phi
\let\oldvarphi\varphi
\renewcommand{\phi}{\oldvarphi}
\renewcommand{\varphi}{\oldphi}
\renewcommand{\emptyset}{\varnothing}
\newcommand{\charfun}[1]{\left\llbracket#1\right\rrbracket}
\newcommand{\vect}[1]{\overline{#1}}
\newcommand{\ZZ}{\mathbb{Z}}
\newcommand{\RR}{\mathbb{R}}
\newcommand{\NN}{\mathbb{N}}
\newcommand{\CC}{\mathcal{C}}
\newcommand{\QQ}{\mathbb{Q}}
\newcommand{\Lin}{\mathbb{A}\mathbbm{f}}
\newcommand{\PE}{\mathbb{PE}}
\newcommand{\PEN}{\mathbb{NPE}}
\newcommand{\OO}{\mathcal{O}}
\newcommand{\assign}{\ensuremath{\leftarrow}}
\newcommand{\relmiddle}[1]{\mathrel{}\middle#1\mathrel{}}
\newcommand{\algeq}[2]{
  \begin{equation}
    \label{#1}
    \begin{minipage}{0.9\textwidth}
      \algorithmstyle{plain}
      \begin{algorithm}[H]
        \centering
        #2
      \end{algorithm}
    \end{minipage}
  \end{equation}
}
\newcommand{\algorithmstyle}[1]{\renewcommand{\algocf@style}{#1}}
\renewcommand{\arraystretch}{1.4}
\renewcommand*\env@matrix[1][\arraystretch]{%
  \edef\arraystretch{#1}%
  \hskip -\arraycolsep
  \let\@ifnextchar\new@ifnextchar
  \array{*\c@MaxMatrixCols c}}
\spnewtheorem{ex}[theorem]{Example}{\bfseries}{\slshape}
\newcounter{auxctr}
\newcounter{eq-term-closed-form}
\DeclareMathOperator{\coeffs}{coeffs}
\DeclareMathOperator{\lia}{max\_coeff\_pos}
\DeclareMathOperator{\sign}{sign}
\DeclareMathOperator{\base}{unmark}
\Crefname{definition}{Def.}{Def.}
\Crefname{example}{Ex.}{Ex.}
\Crefname{appendix}{Appendix}{Appendix}
\Crefname{ex}{Ex.}{Ex.}
\Crefname{theorem}{Thm.}{Thm.}
\Crefname{lemma}{Lemma}{Lemmas}
\Crefname{section}{Sect.}{Sect.}
\Crefname{algorithm}{Procedure}{Procs}
\Crefname{algocf}{Alg.}{Algorithms}
\Crefname{corollary}{Cor.}{Cor.}
\title{Termination of Triangular Integer Loops is Decidable\thanks{funded by DFG grant
    389792660 as part of \href{https://perspicuous-computing.science}{TRR~248} and by DFG
    grant GI 274/6}}
\author{Florian Frohn\inst{1}\paper{\orcidID{0000-0003-0902-1994}} \and J\"urgen Giesl\inst{2}\paper{\orcidID{0000-0003-0283-8520}}}
\institute{Max Planck Institute for Informatics, Saarbr\"ucken, Germany \and
  LuFG Informatik 2, RWTH Aachen University, Germany}
\begin{document}

\setlength{\abovedisplayskip}{3pt}
\setlength{\belowdisplayskip}{3pt}
\setlength{\abovedisplayshortskip}{1pt}
\setlength{\belowdisplayshortskip}{3pt}

\maketitle

\begin{abstract}
  We consider the
  problem whether termination of affine integer
  loops  is decidable.
  Since Tiwari conjectured decidability in 2004 \cite{DBLP:conf/cav/Tiwari04}, only special cases have been solved
\cite{Bozga14,DBLP:conf/cav/Braverman06,DBLP:conf/soda/OuakninePW15}.
   We complement this work by
    proving decidability for the case that the update matrix
     is triangular.
\end{abstract}

\section{Introduction}
\label{sec:Introduction}

We consider affine integer loops of the form
\algeq{loop}{\lWhile{$\phi$}{$\vect{x} \leftarrow A\,\vect{x} + \vect{a}$.}}
Here, $A \in \ZZ^{d \times d}$ for some dimension $d \geq 1$,
$\vect{x}$ is a column vector of pairwise different variables $x_1,\ldots,x_d$, $\vect{a} \in \ZZ^d$, and $\phi$ is a
conjunction of inequalities of the form $\alpha > 0$ where $\alpha \in \Lin[\vect{x}]$ is
an affine expression
with rational
coefficients\footnote{Note that multiplying with the least common multiple of
  all denominators yields an equivalent constraint with integer coefficients,
  i.e., allowing rational instead of integer coefficients does not extend the
  considered class of loops.}  over $\vect{x}$ (i.e., $\Lin[\vect{x}] =
\{\vect{c}^T\, \vect{x} + c \mid \vect{c} \in \QQ^d, c \in \QQ\}$).
So $\phi$ has the form $B\,\vect{x} + \vect{b} > \vect{0}$ where $\vect{0}$
is the vector containing $k$ zeros, $B \in \QQ^{k \times d}$, and $\vect{b} \in \QQ^k$ for
some $k \in \NN$.
\Cref{def:term} formalizes the intuitive notion of termination for such loops.
\begin{definition}[Termination]
  \label{def:term}
  Let $f:\ZZ^d \to \ZZ^d$ with $f(\vect{x}) = A\,\vect{x} + \vect{a}$. If
  \[
    \exists \vect{c} \in \ZZ^{d}.\ \forall n \in \NN.\ \phi[\vect{x} / f^n(\vect{c})],
  \]
  then \eqref{loop} is \emph{non-terminating} and $\vect{c}$ is a \emph{witness} for
  non-termination.
  Otherwise, \eqref{loop} \emph{terminates}.
\end{definition}

Here, $f^n$ denotes the $n$-fold application of $f$, i.e., we have $f^0(\vect{c}) = \vect{c}$ and
$f^{n+1}(\vect{c}) = f(f^n(\vect{c}))$.
We call $f$ the \emph{update} of \eqref{loop}.
Moreover, for any entity
$s$, $s[x / t]$ denotes the entity that results from $s$ by replacing all occurrences of $x$ by $t$.
  Similarly, if
  $\vect{x} = \begin{sbmatrix}x_1\\[-.15cm]\vdots\\x_m\end{sbmatrix}$ and
      $\vect{t} = \begin{sbmatrix}t_1\\[-.15cm]\vdots\\t_m\end{sbmatrix}$, then
$s[\vect{x} / \vect{t}]$ denotes the entity resulting from $s$ by replacing all
      occurrences of $x_i$ by $t_i$ for each $1 \leq i  \leq m$.

\begin{ex}\label{leading ex}
  \label{ex}
  Consider the  loop

    \algorithmstyle{plain}
    \begin{algorithm}[H]
      \centering
      \lWhile{$y + z > 0$}{
        $\begin{bmatrix}
          w\\
          x\\
          y\\
          z
        \end{bmatrix}
        \assign
        \begin{bmatrix}
          2\\
          x + 1\\
          - w - 2 \cdot y\\
          x
        \end{bmatrix}$
      }
    \end{algorithm}
    \noindent
      where the update of all variables is executed simultaneously.
    This program belongs to our class of affine loops, because it can be written equivalently as follows.
    \algorithmstyle{plain}
    \begin{algorithm}[H]
      \centering
      \lWhile{$y + z > 0$}{
        $\begin{bmatrix}
          w\\
          x\\
          y\\
          z
        \end{bmatrix}
        \assign
        \begin{bmatrix*}[r]
          0&0&0&0\\
          0&1&0&0\\
          -1&\phantom{-}0&-2&\phantom{-}0\\
          0&1&0&0
        \end{bmatrix*}
        \begin{bmatrix}
          w\\
          x\\
          y\\
          z
        \end{bmatrix}
        +
        \begin{bmatrix*}[r]
          2\\
          1\\
          0\\
          0
        \end{bmatrix*}$
      }
    \end{algorithm}
\end{ex}

While termination of affine loops is known to be decidable if the variables range over the real
\cite{DBLP:conf/cav/Tiwari04} or the rational numbers \cite{DBLP:conf/cav/Braverman06}, the
integer case is a well-known open problem
\cite{DBLP:journals/toplas/Ben-AmramGM12,Bozga14,DBLP:conf/cav/Braverman06,DBLP:conf/soda/OuakninePW15,DBLP:conf/cav/Tiwari04}.\footnote{The 
  proofs for real or rational numbers do not carry over to the integers since
  \cite{DBLP:conf/cav/Tiwari04} uses Brouwer's Fixed Point Theorem which is not applicable
  if the variables range over $\ZZ$ and \cite{DBLP:conf/cav/Braverman06} relies on the
  density of $\QQ$ in $\RR$.}
However, certain special cases have been solved:
Braverman \cite{DBLP:conf/cav/Braverman06} showed that termination of \emph{linear} loops
is decidable
(i.e., loops of the form \eqref{loop} where $\vect{a}$ is $\vect{0}$ and $\phi$ is
of the form $B\,\vect{x} > \vect{0}$).
Bozga et al.\ \cite{Bozga14} showed decidability for the case that the update matrix $A$
in \eqref{loop} has the
\emph{finite monoid property}, i.e., if there is an
$n > 0$ such that $A^n$ is diagonalizable and all eigenvalues of $A^n$ are in
$\{0,1\}.$
Ouaknine et al.\ \cite{DBLP:conf/soda/OuakninePW15} proved
decidability for the
case $d \leq 4$ and for the case that $A$ is diagonalizable.

Ben-Amram et al.\ \cite{DBLP:journals/toplas/Ben-AmramGM12} showed
undecidability of termination for certain extensions of affine integer
loops, e.g., for loops where the body is of the
form $\mathbf{if}\ x > 0\ \mathbf{then}\ \vect{x} \assign A\,\vect{x}\ \mathbf{else}\ \vect{x} \assign
A'\,\vect{x}$ where $A,A' \in \ZZ^{d \times d}$ and $x \in \vect{x}$.

In this paper, we present another substantial step towards the solution of the open problem
whether termination of affine integer loops is decidable. We show that termination is
decidable
for \emph{triangular} loops \eqref{loop} where $A$ is a
triangular matrix (i.e., all entries of $A$ below or above the main diagonal are
zero).
Clearly, the order of the variables is irrelevant, i.e.,
   our results also cover the case that $A$ can be transformed into a triangular matrix by
   reordering $A$, $\vect{x}$, and $\vect{a}$ accordingly.\footnote{Similarly, one could of course
  also use other
  termination-preserving pre-processings and try to transform a
  given program into a triangular loop.}
 So essentially, triangularity means that the program variables $x_1,\ldots,x_d$ can be
 ordered such that in each loop iteration, the
 new value of $x_i$ only depends on the
previous values of $x_1,\ldots,x_{i-1},x_i$. Hence, this excludes programs with
``cyclic dependencies'' of variables (e.g., where the new values of $x$ and $y$ both depend on
the old values of both $x$ and $y$). While triangular loops are a very
restricted subclass of general integer programs, integer programs often contain such loops. Hence, tools for termination analysis of such
programs (e.g.,
\cite{DBLP:conf/tacas/BrockschmidtCIK16,KroeningTOPLAS,UltimatePLDI18,SilvaUrbanCAV15,AProVE-JAR17,LarrazFMCAD13,HipTNT})
could benefit from integrating our decision procedure and
applying it whenever a sub-program is an affine triangular loop.

Note
that triangularity and diagonalizability of matrices do not imply each other.
As we consider loops with arbitrary dimension, this means that the class of loops considered in this paper is not covered by \cite{Bozga14,DBLP:conf/soda/OuakninePW15}.
Since we consider affine instead of linear loops, it is also orthogonal to
\cite{DBLP:conf/cav/Braverman06}.

To see the difference between our and previous results, note that
a triangular matrix $A$ where
$c_1,\ldots,c_k$ are the \emph{distinct}  entries on the diagonal
is diagonalizable iff $(A - c_1 I) \ldots (A- c_k I)$ is the zero
 matrix.\footnote{The reason is that
   in this case, $(x - c_1)
 \ldots (x- c_k)$ is the minimal polynomial of $A$ and diagonalizability is equivalent to
 the fact that the minimal polynomial is a product of  distinct linear factors.}
 Here, $I$ is the identity matrix. So an easy example for a triangular loop where the
 update matrix is not diagonalizable is the following well-known program (see, e.g., \cite{DBLP:journals/toplas/Ben-AmramGM12}):
 \algorithmstyle{plain}
 \begin{algorithm}[H]
   \centering
   \lWhile{$x > 0$}{$x \assign x+y;\;  y \assign y-1$}
 \end{algorithm}
 It terminates as $y$ eventually becomes negative and then $x$ decreases in each iteration. In matrix notation, the loop body is
 $\begin{bmatrix}x\\
   y
 \end{bmatrix}
 \leftarrow
 \begin{bmatrix*}[r]
   1&1\\
   0&1
 \end{bmatrix*}
 \;
 \begin{bmatrix}x\\
   y
 \end{bmatrix}
 +
 \begin{bmatrix*}[r]0\\
   -1
 \end{bmatrix*}$,
i.e., the update matrix is triangular.
Thus, this program is in our class of programs where we show that termination is decidable.
 However,  the only entry on the diagonal of the  update matrix $A$
is $c = 1$ and
$
A - c \,I = \begin{bmatrix*}[r]
   0&1\\
   0&0
 \end{bmatrix*}
$
is not the zero matrix. So $A$ (and in fact each $A^n$ where $n \in \NN$)
is not diagonalizable. Hence, extensions of this example to a dimension
greater than $4$ where the loop is still triangular are not covered by any of the previous
results.\footnote{For instance, consider
  \lWhile{$x > 0$}{$x \assign x+y + z_1 + z_2 + z_3; \; y \assign y-1$}
  \vspace{-1.2em}}

Our proof that termination is decidable for triangular loops proceeds in three steps.
We first prove that termination of triangular loops is decidable iff
termination of \emph{non-negative triangular} loops (\emph{nnt-loops}) is decidable,
cf.\ \Cref{sec:nnt}.
A loop is non-negative if the diagonal of $A$ does not contain negative entries.
Second, we show how to compute \emph{closed forms} for nnt-loops, i.e., vectors $\vect{q}$
of $d$ expressions over the variables $\vect{x}$ and $n$ such that
$\vect{q}[n/c] = f^c(\vect{x})$ for all $c\geq 0$, see \Cref{sec:closed}.
Here,
triangularity of the matrix $A$ allows us to
treat the variables step by step. So for any $1 \leq i \leq d$, we already know the closed forms for
$x_1,\ldots,x_{i-1}$ when computing the closed form for $x_i$.
The idea of computing closed forms for the repeated updates of loops was inspired by our
previous work on
inferring lower bounds on the
runtime of integer programs \cite{ijcar2016}. But in contrast to \cite{ijcar2016}, here
the computation of the closed form always succeeds due to the restricted shape of the programs.
Finally, we explain how to decide termination of nnt-loops by reasoning about their closed
forms in \Cref{sec:deciding}.
While our technique does not yield witnesses for non-termination, we show that
it yields witnesses for \emph{eventual} non-termination, i.e., vectors $\vect{c}$ such
that $f^n(\vect{c})$ witnesses non-termination for some $n \in \NN$.
\report{All missing proofs can be found in
\Cref{app:proofs}.}\paper{Detailed proofs for all lemmas and theorems can be found in
  \cite{Report}.}

\section{From Triangular to Non-Negative Triangular Loops}
\label{sec:nnt}

To transform triangular loops into nnt-loops, we define how to \emph{chain} loops.
Intuitively, chaining yields a new loop where a single iteration is equivalent to two iterations of the original loop.
Then we show that chaining a triangular loop always yields an nnt-loop
and that chaining is equivalent w.r.t.\ termination.
\begin{definition}[Chaining]
  \label{def:chaining}
  \emph{Chaining} the loop \eqref{loop} yields:
  \algeq{chained}{
    \lWhile{$\phi \land \phi[\vect{x} / A\,\vect{x} + \vect{a}]$}{
      $\vect{x} \leftarrow A^2\,\vect{x} + A\,\vect{a} + \vect{a}$
    }
  }
\end{definition}
\begin{ex}
  \label{ex:chained}
  Chaining \Cref{ex} yields

  \algorithmstyle{plain}
  \begin{algorithm}[H]
    \While{$y + z > 0 \land - w - 2 \cdot y + x > 0$}{
      $\begin{bmatrix}
        w\\
        x\\
        y\\
        z
      \end{bmatrix}
      \assign
      \begin{bmatrix*}[r]
        0&0&0&0\\
        0&1&0&0\\
        -1&\phantom{-}0&-2&\phantom{-}0\\
        0&1&0&0
      \end{bmatrix*}^2
      \begin{bmatrix}
        w\\
        x\\
        y\\
        z
      \end{bmatrix}
      +
      \begin{bmatrix*}[r]
        0&0&0&0\\
        0&1&0&0\\
        -1&\phantom{-}0&-2&\phantom{-}0\\
        0&1&0&0
      \end{bmatrix*}
      \begin{bmatrix*}[r]
        2\\
        1\\
        0\\
        0
      \end{bmatrix*}
      +
      \begin{bmatrix*}[r]
        2\\
        1\\
        0\\
        0
      \end{bmatrix*}$
    }
  \end{algorithm}
  \noindent
  which simplifies to the following nnt-loop: \vspace*{-.2cm}

  \algorithmstyle{plain}
  \begin{algorithm}[H]
    \lWhile{$y + z > 0 \land - w - 2 \cdot y + x > 0$}{
      $\begin{bmatrix}
        w\\
        x\\
        y\\
        z
      \end{bmatrix}
      \assign
      \begin{bmatrix*}[r]
        0&0&0&0\\
        0&1&0&0\\
        2&0&4&0\\
        0&1&0&0
      \end{bmatrix*}
      \begin{bmatrix}
        w\\
        x\\
        y\\
        z
      \end{bmatrix}
      +
      \begin{bmatrix*}[r]
        2\\
        2\\
        -2\\
        1
      \end{bmatrix*}$
    }
  \end{algorithm}
\end{ex}

\report{The following lemma}\paper{\noindent{}\Cref{lem:quadratic}} is needed to
prove that \eqref{chained} is an nnt-loop if \eqref{loop} is triangular\report{ (see
\Cref{lem:quadratic_proof} for the straightforward proof of \Cref{lem:quadratic})}.
\begin{lemma}[Squares of Triangular Matrices]
  \label{lem:quadratic}
  For every triangular matrix $A$, $A^2$ is a triangular matrix whose diagonal entries are non-negative.
\end{lemma}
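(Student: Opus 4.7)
The statement has two parts: closure of triangularity under squaring, and non-negativity of the diagonal of $A^2$. I will treat both together by a direct index computation on the entries of $A^2$.

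My plan is to fix notation $A = (a_{ij})_{1 \le i,j \le d}$ and assume without loss of generality that $A$ is lower triangular, so $a_{ij} = 0$ whenever $j > i$ (the upper-triangular case is symmetric, or can be reduced to the lower case by transposition, since transposition commutes with squaring and preserves both triangularity and the diagonal). Writing $A^2 = (b_{ij})$, I have $b_{ij} = \sum_{k=1}^d a_{ik} a_{kj}$. For the triangularity claim, I would take $j > i$ and observe that in each summand at least one factor vanishes: if $k < j$ then $a_{kj} = 0$, and if $k \ge j$ then $k > i$, forcing $a_{ik} = 0$. Hence $b_{ij} = 0$ for all $j > i$, so $A^2$ is lower triangular.

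For the diagonal, I would specialize the same sum to $j = i$. A summand $a_{ik} a_{ki}$ is forced to be zero unless simultaneously $k \le i$ (from $a_{ik}$) and $k \ge i$ (from $a_{ki}$), which pins $k = i$. Thus $b_{ii} = a_{ii}^2 \ge 0$, giving the non-negativity claim at once.

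I do not expect any real obstacle here: the argument is a one-line index chase once the triangularity convention is fixed. The only mild subtlety is covering both upper- and lower-triangular cases uniformly, which I handle by the transposition remark above.
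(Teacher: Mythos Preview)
Your proposal is correct and follows essentially the same approach as the paper: a direct entrywise computation on $A^2$, assuming without loss of generality that $A$ is lower triangular, then showing that every summand in $\sum_k a_{ik}a_{kj}$ vanishes when $j>i$ and that only the $k=i$ term survives on the diagonal. The paper argues the diagonal case by showing each summand is $\geq 0$, whereas you go a hair further and identify $b_{ii}=a_{ii}^2$ explicitly, but this is the same argument.
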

\report{\makeproof{lem:quadratic}{
  Let $A$ be a lower triangular matrix of dimension $d$ (the proof for upper triangular matrices is analogous).
  We have $A^2_{i;j} = \sum_{k = 1}^d A_{i;k} \cdot A_{k;j}$ (where $A_{i;j}$ is the $i^{th}$ entry in $A$'s $j^{th}$ column).

  If $i < j$, then $\sum_{k = 1}^d A_{i;k} \cdot A_{k;j} = 0$ as, for each addend, either $A_{i;k} = 0$ (if $i < k$) or $A_{k;j} = 0$ (if $k \leq i < j$), which proves that $A^2$ is triangular.

  If $i = j$, then $\sum_{k = 1}^d A_{i;k} \cdot A_{k;i} \geq 0$ as for each addend, either $A_{i;k} = 0$ (if $i < k$) or $A_{k;i} = 0$ (if $i > k$) or $A_{i;k} \cdot A_{k;i} \geq 0$ (if $i = k$), which proves that all diagonal entries of $A^2$ are non-negative.
}}
\begin{corollary}[Chaining Loops]
  \label{lem:chained-update-non-neg}
  If \eqref{loop} is triangular, then \eqref{chained} is an nnt-loop.
\end{corollary}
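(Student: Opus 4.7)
The plan is essentially immediate from \Cref{lem:quadratic}. The chained loop \eqref{chained} has update matrix $A^2$ and constant vector $A\vect{a} + \vect{a}$, while its guard is still a conjunction of strict affine inequalities, so it is syntactically of the form \eqref{loop}. To conclude it is an nnt-loop, I only need to verify that $A^2$ is triangular with non-negative diagonal entries, which is exactly the content of \Cref{lem:quadratic} applied to the triangular matrix $A$.

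So the proof is a one-liner: assume \eqref{loop} is triangular, i.e., $A$ is triangular; then by \Cref{lem:quadratic}, $A^2$ is triangular with non-negative diagonal; hence the update matrix of \eqref{chained} is triangular and non-negative on the diagonal, which is the definition of an nnt-loop.

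The only thing worth double-checking is that the guard of \eqref{chained}, namely $\phi \land \phi[\vect{x}/A\vect{x}+\vect{a}]$, still fits the admissible form of guards for loops in our class, i.e., a conjunction of inequalities $\alpha > 0$ with $\alpha \in \Lin[\vect{x}]$. This is immediate because substituting the affine expression $A\vect{x}+\vect{a}$ for $\vect{x}$ into an affine expression over $\vect{x}$ yields another affine expression over $\vect{x}$, and a conjunction of such inequalities with another such conjunction is again of the required form. There is no real obstacle here: the lemma does the structural work, and the rest is bookkeeping to confirm that chaining preserves membership in the syntactic class of affine loops.
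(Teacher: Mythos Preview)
Your proposal is correct and takes essentially the same approach as the paper, which simply states that the result is an immediate consequence of \Cref{def:chaining} and \Cref{lem:quadratic}. Your additional bookkeeping about the guard remaining a conjunction of affine inequalities is a reasonable sanity check that the paper leaves implicit.
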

\begin{proof}
  Immediate consequence of \Cref{def:chaining} and \Cref{lem:quadratic}.
\end{proof}
\begin{lemma}[Equivalence of Chaining]
  \label{lem:chaining-complete}
  \eqref{loop} terminates $\iff$ \eqref{chained} terminates.
\end{lemma}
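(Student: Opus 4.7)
The plan is to prove both directions by contrapositive, translating a non-terminating run of one loop into a non-terminating run of the other. Let $f(\vect{x}) = A\vect{x} + \vect{a}$ be the update of \eqref{loop} and let $g(\vect{x}) = A^2\vect{x} + A\vect{a} + \vect{a}$ be the update of \eqref{chained}. The key algebraic observation is that $g = f \circ f$, and hence by an easy induction $g^n(\vect{c}) = f^{2n}(\vect{c})$ for every $\vect{c} \in \ZZ^d$ and every $n \in \NN$. This identity is the bridge between the two termination statements.

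For the direction ``\eqref{loop} terminates $\Rightarrow$ \eqref{chained} terminates,'' I would suppose, for contradiction, that \eqref{chained} is non-terminating with witness $\vect{c}$. Unfolding \Cref{def:term} for the chained loop, this means that for every $n \in \NN$ we have both $\phi[\vect{x}/g^n(\vect{c})]$ and $\phi[\vect{x}/(A\,g^n(\vect{c}) + \vect{a})]$, i.e., using $g^n(\vect{c}) = f^{2n}(\vect{c})$ and $A\,f^{2n}(\vect{c}) + \vect{a} = f^{2n+1}(\vect{c})$, both $\phi[\vect{x}/f^{2n}(\vect{c})]$ and $\phi[\vect{x}/f^{2n+1}(\vect{c})]$. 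As $n$ ranges over $\NN$, the indices $2n$ and $2n+1$ cover all of $\NN$, so $\phi[\vect{x}/f^m(\vect{c})]$ holds for every $m \in \NN$, meaning $\vect{c}$ also witnesses non-termination of \eqref{loop}, a contradiction.

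For the converse, suppose \eqref{loop} does not terminate, so there is a witness $\vect{c}$ with $\phi[\vect{x}/f^m(\vect{c})]$ for all $m \in \NN$. In particular, for every $n$, the two conditions $\phi[\vect{x}/f^{2n}(\vect{c})]$ and $\phi[\vect{x}/f^{2n+1}(\vect{c})]$ both hold, which, reading them back through the identity $g^n(\vect{c}) = f^{2n}(\vect{c})$, is exactly the guard $\phi \land \phi[\vect{x}/A\vect{x}+\vect{a}]$ evaluated at $g^n(\vect{c})$. Hence $\vect{c}$ witnesses non-termination of \eqref{chained}, contradicting the assumption.

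I do not expect a real obstacle here: the argument is a bookkeeping exercise once one notes $g = f \circ f$. The only point that deserves a careful sentence is the observation $A\,f^{2n}(\vect{c}) + \vect{a} = f^{2n+1}(\vect{c})$, which is just the definition of $f$ but is what makes the second conjunct of the chained guard coincide with the ``odd-indexed'' instance of $\phi$ along the $f$-orbit.
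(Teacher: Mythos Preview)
Your proof is correct and follows essentially the same approach as the paper: the paper presents the argument as a single chain of equivalences (non-termination of \eqref{loop} $\iff$ $\forall n.\ \phi[\vect{x}/f^{2n}(\vect{c})] \land \phi[\vect{x}/f^{2n+1}(\vect{c})]$ $\iff$ non-termination of \eqref{chained}), using exactly the identities $g^n = f^{2n}$ and $A\,f^{2n}(\vect{c}) + \vect{a} = f^{2n+1}(\vect{c})$ that you spell out. You have simply unfolded the two directions of the biconditional separately.
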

\begin{proof}
  By \Cref{def:term}, \eqref{loop} does not terminate iff
  \[
    \begin{array}{lll}
      &\exists \vect{c} \in \ZZ^{d}.\ \forall n \in \NN.\ \phi[\vect{x} / f^n(\vect{c})] & \\
      \iff&\exists \vect{c} \in \ZZ^{d}.\ \forall n \in \NN.\ \phi[\vect{x} / f^{2 \cdot
            n}(\vect{c})] \land \phi[\vect{x} / f^{2 \cdot n + 1}(\vect{c})]\\
      \iff&\exists \vect{c} \in \ZZ^{d}.\ \forall n \in \NN.\ \phi[\vect{x} / f^{2 \cdot n}(\vect{c})] \land \phi[\vect{x} / A\,f^{2 \cdot n}(\vect{c}) + \vect{a}] & (\text{by Def.\ of } f),
    \end{array}
  \]
  i.e., iff \eqref{chained} does not terminate as
  $f^2(\vect{x}) = A^2\,\vect{x} + A\,\vect{a} + \vect{a}$
  is the update of \eqref{chained}.
\end{proof}
 
\begin{theorem}[Reducing Termination  to nnt-Loops]
  \label{thm:ptill}
  Termination of triangular loops is decidable iff termination of nnt-loops is decidable.
\end{theorem}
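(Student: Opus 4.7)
The plan is to show both directions of the equivalence, each by a short reduction.

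For the "only if" direction, I would observe that nnt-loops are by definition a syntactic subclass of triangular loops (triangular loops whose diagonal entries are additionally non-negative). Hence any decision procedure for termination of arbitrary triangular loops immediately decides termination of nnt-loops: given an nnt-loop, feed it directly as a triangular loop into the assumed procedure.

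For the "if" direction, I would use the chaining construction from \Cref{def:chaining} as the reduction. Given a triangular loop \eqref{loop}, the chained loop \eqref{chained} is computable from it in a straightforward way (forming $A^2$, $A\vect{a}+\vect{a}$, and the conjunction $\phi \land \phi[\vect{x}/A\vect{x}+\vect{a}]$). By \Cref{lem:chained-update-non-neg}, \eqref{chained} is an nnt-loop, so the assumed decision procedure for nnt-loops applies to it. By \Cref{lem:chaining-complete}, \eqref{loop} terminates iff \eqref{chained} terminates, so the answer returned by the nnt-procedure on \eqref{chained} is also the correct answer for \eqref{loop}.

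I do not expect any real obstacle here: both directions are essentially immediate consequences of already-established facts. The "only if" direction is a subclass argument, and the "if" direction is a one-line reduction whose two correctness ingredients (the chained loop is an nnt-loop, and chaining preserves (non-)termination) are exactly the content of \Cref{lem:chained-update-non-neg} and \Cref{lem:chaining-complete}. The only thing to be slightly careful about is making explicit that the chaining transformation is effective, which is clear from \Cref{def:chaining}. Therefore the proof will consist of just two short paragraphs invoking these results.
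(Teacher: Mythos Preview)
Your proposal is correct and matches the paper's approach: the paper's proof is the one-line ``Immediate consequence of \Cref{lem:chained-update-non-neg} and \Cref{lem:chaining-complete}'', which is precisely your ``if'' direction, while your ``only if'' direction (nnt-loops are a subclass of triangular loops) is left implicit there as trivially obvious. Your added remark that the chaining transformation is effective is a reasonable clarification but not something the paper spells out.
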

\begin{proof}
  Immediate consequence of \Cref{lem:chained-update-non-neg} and \Cref{lem:chaining-complete}.
\end{proof}

\noindent
Thus, from now on we restrict our attention to nnt-loops.

\section{Computing Closed Forms}
\label{sec:closed}

The next step towards our decidability proof is to show that $f^n(\vect{x})$ is
equivalent
to a vector of \emph{poly-exponential expressions} for each nnt-loop, i.e., the closed
form of each nnt-loop can be represented by such expressions. Here, \emph{equivalence}
means that two expressions evaluate to the same result for all variable assignments.

Poly-exponential expressions are sums of arithmetic terms where it is always
clear which addend determines the asymptotic growth of the whole
expression when
increasing a designated variable $n$.  This is crucial for our decidability proof in
\Cref{sec:deciding}. 
Let $\NN_{\geq 1} = \{b \in \NN \mid b
\geq 1\}$ (and $\QQ_{>0}$, $\NN_{>1}$, etc.\ are defined analogously).
Moreover, $\Lin[\vect{x}]$ is again the set of all affine expressions over $\vect{x}$. \report{\pagebreak}

\begin{definition}[Poly-Exponential Expressions]
  \label{def:poly-exp}
  Let $\CC$ be the set of all finite conjunctions over the literals
    $n = c, n \neq c$ where $n$ is a designated variable and $c \in \NN$.
  Moreover for each formula $\psi$ over $n$,
    let $\charfun{\psi}$ be the characteristic function of $\psi$, i.e.,
  $\charfun{\psi}(c) = 1$ if $\psi[n/c]$ is valid and $\charfun{\psi}(c) = 0$,
  otherwise.
  The set of all \emph{poly-ex\-po\-nen\-tial expressions} over $\vect{x}$ is
  \[
    \PE[\vect{x}] = \left\{\sum_{j=1}^\ell \charfun{\psi_j} \cdot \alpha_j \cdot n^{a_j} \cdot b_j^n
    \relmiddle{|} \ell, a_j \in \NN, \; \psi_j \in \CC, \;
    \alpha_j \in \Lin[\vect{x}], \; b_j \in \NN_{\geq 1}\right\}.
  \]
\end{definition}

As $n$ ranges over $\NN$, we use $\charfun{n > c}$ as syntactic sugar for $\charfun{\bigwedge_{i=0}^{c} n \neq i}$.
So an example for a poly-exponential expression is
\[
\charfun{n > 2} \cdot (2 \cdot x + 3 \cdot y -1) \cdot n^3 \cdot 3^n \; + \; \charfun{n =
  2} \cdot (x-y).
\]
Moreover, note that
if $\psi$ contains a \emph{positive} literal (i.e., a literal of the form ``$n = c$'' for some
number $c \in \NN$), then
$\charfun{\psi}$ is equivalent to either $0$ or $\charfun{n=c}$.

The crux of the proof that poly-exponential expressions can
represent closed forms is to show that certain sums over products of exponential
and poly-ex\-po\-nential expressions can be represented by poly-exponential
expressions, cf.\ \Cref{lem:pe}. To construct these
expressions, we use a variant of \cite[Lemma 3.5]{BagnaraZZ03TR}. As
usual,
$\QQ[\vect{x}]$ is the set of all polynomials over
$\vect{x}$ with rational coefficients.

\begin{lemma}[Expressing Polynomials by Differences \cite{BagnaraZZ03TR}]
  \label{lem:pol}
  If $q \in \QQ[n]$ and $c \in \QQ$, then there is an $r \in \QQ[n]$
  such that
  $q = r - c \cdot r[n/n-1]$ for all $n \in \NN$.
\end{lemma}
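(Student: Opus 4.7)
The plan is to show that the $\QQ$-linear operator $T_c \colon \QQ[n] \to \QQ[n]$ defined by $T_c(r) = r - c \cdot r[n/n-1]$ is surjective, by induction on $\deg q$. It is convenient to treat the two cases $c \neq 1$ and $c = 1$ separately, because $T_c$ acts differently on leading terms: for $c \neq 1$ it preserves the degree (multiplying the leading coefficient by $1 - c$), whereas for $c = 1$ the top term cancels and the degree strictly decreases.

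For the base case $\deg q = 0$, write $q = k$ with $k \in \QQ$. I would take $r = k/(1-c)$ if $c \neq 1$, and $r(n) = k \cdot n$ if $c = 1$; in both cases $T_c(r) = q$ follows by direct computation.

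For the inductive step, suppose $q$ has degree $d \geq 1$ with leading coefficient $a$. The idea is to pick a monomial ansatz $r_0$ that cancels the top term of $q$, apply the induction hypothesis to $q - T_c(r_0)$ (which has degree strictly less than $d$), and take $r = r_0 + r_1$ where $T_c(r_1) = q - T_c(r_0)$. When $c \neq 1$, the ansatz $r_0(n) = \tfrac{a}{1-c}\, n^d$ works, since then $T_c(r_0)$ has leading coefficient $a$ and degree $d$. When $c = 1$, I would instead take $r_0(n) = \tfrac{a}{d+1}\, n^{d+1}$; expanding $(n-1)^{d+1}$ via the binomial theorem shows that the leading term of $T_1(r_0) = r_0 - r_0[n/n-1]$ is $a\, n^d$, again reducing $q$ to a polynomial of strictly smaller degree.

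The only real subtlety is noticing that in the $c = 1$ case the degree of $r$ must exceed that of $q$, which dictates the choice $r_0(n) = \tfrac{a}{d+1}\, n^{d+1}$ rather than a degree-$d$ ansatz; otherwise the induction would fail to make progress. With the correct ansatz in each case the induction is routine, and since all coefficients introduced are rational, the resulting $r$ lies in $\QQ[n]$ as required. Because two polynomials that agree on all $n \in \NN$ agree as polynomials, the identity $q = r - c \cdot r[n/n-1]$ then holds for all $n \in \NN$, completing the proof.
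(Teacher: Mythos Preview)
Your proof is correct and follows essentially the same route as the paper: both argue by induction on the degree, use the ansatz $r_0 = \tfrac{a}{1-c}\,n^d$ when $c\neq 1$ and $r_0 = \tfrac{a}{d+1}\,n^{d+1}$ when $c=1$ (the paper calls this $s$), and reduce to a strictly lower-degree remainder via the binomial theorem. The only cosmetic difference is that you phrase everything in terms of the linear operator $T_c$, which streamlines the presentation but is mathematically identical.
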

\report{\makeproof{lem:pol}{
  We use induction on the degree $d$ of $q$. In the induction base, let $d =
  0$, i.e., $q = c_0 \in \QQ$.  If $c = 1$, then we fix $r = c_0 \cdot n$ and
  we get
  \[
    r - c \cdot r[n/n-1] = r - r[n/n-1] = c_0 \cdot n - c_0 \cdot (n-1) = c_0 = q.
  \]
  If $c \neq 1$, then we fix $r = \frac{c_0}{1-c}$ and we get
  \[
  r - c \cdot r[n/n-1] = \tfrac{c_0}{1-c} - c \cdot
  \tfrac{c_0}{1-c} = \tfrac{c_0}{1-c} - \tfrac{c \cdot c_0}{1 - c} = \tfrac{c_0 \cdot (1-c)}{1-c} = c_0 = q.
  \]
  For the induction step, let $d > 0$, i.e., $q = \sum_{i=0}^d c_i \cdot
  n^i$. Let $s = \frac{c_d \cdot n^{d+1}}{d+1}$ if $c = 1$ and $s = \frac{c_d \cdot n^d}{1-c}$,
  otherwise. Moreover, let $t = q - s + c \cdot s[n/n-1]$. If
  $c = 1$, then we have
  \[
    \begin{array}{lll}
      t &=& q - \frac{c_d \cdot n^{d+1}}{d+1} + \frac{c_d \cdot (n-1)^{d+1}}{d+1}\\
           &=& q - \frac{c_d \cdot n^{d+1} - c_d \cdot (n-1)^{d+1}}{d+1}\\
           &=& q - \frac{c_d \cdot n^{d+1} - c_d \cdot \sum_{k=0}^{d+1}\binom{d+1}{k} \cdot n^{d+1-k} \cdot (-1)^k}{d+1}\\
           &=& q - \frac{- c_d \cdot \sum_{k=1}^{d+1}\binom{d+1}{k} \cdot n^{d+1-k} \cdot (-1)^k}{d+1}\\
           &=& q + \frac{c_d \cdot \sum_{k=1}^{d+1}\binom{d+1}{k} \cdot n^{d+1-k} \cdot (-1)^k}{d+1}\\
           &=& q + \frac{c_d \cdot \left(\sum_{k=2}^{d+1}\binom{d+1}{k} \cdot n^{d+1-k} \cdot (-1)^k\right) - c_d \cdot (d+1) \cdot n^d}{d+1}\\
           &=& q + \frac{c_d \cdot \left(\sum_{k=2}^{d+1}\binom{d+1}{k} \cdot n^{d+1-k} \cdot (-1)^k\right)}{d+1} - c_d \cdot n^d\\
           &=& \sum_{i=0}^d c_i \cdot n^i + \frac{c_d \cdot \left(\sum_{k=2}^{d+1}\binom{d+1}{k} \cdot n^{d+1-k} \cdot (-1)^k\right)}{d+1} - c_d \cdot n^d\\
           &=& \sum_{i=0}^{d-1} c_i \cdot n^i + \frac{c_d \cdot \left(\sum_{k=2}^{d+1}\binom{d+1}{k} \cdot n^{d+1-k} \cdot (-1)^k\right)}{d+1}\\
    \end{array}
  \]
  which is a polynomial whose degree is at most $d-1$.  If $c \neq 1$, then we
  have
  \[
    \begin{array}{lll}
      t    &=& q - \frac{c_d \cdot n^d}{1-c} + c \cdot \frac{c_d \cdot (n-1)^d}{1-c}\\
           &=& q - \frac{c_d \cdot n^d}{1-c} + \frac{c \cdot c_d \cdot (n-1)^d}{1-c}\\
      &=& q - \frac{c_d \cdot n^d- c \cdot c_d \cdot (n-1)^d}{1-c}
      \end{array}\]\pagebreak
        \[
        \begin{array}{lll}
          &=& q - \frac{c_d \cdot n^d- c \cdot c_d \cdot \sum_{k=0}^{d}\binom{d}{k} \cdot n^{d-k} \cdot (-1)^k}{1-c}\\
           &=& q - \frac{c_d \cdot n^d- c \cdot c_d \cdot n^d - c \cdot c_d \cdot \sum_{k=1}^{d}\binom{d}{k} \cdot n^{d-k} \cdot (-1)^k}{1-c}\\
           &=& q - \frac{(1-c) \cdot c_d \cdot n^d - c \cdot c_d \cdot \sum_{k=1}^{d}\binom{d}{k} \cdot n^{d-k} \cdot (-1)^k}{1-c}\\
           &=& q - c_d \cdot n^d + \frac{c \cdot c_d \cdot \sum_{k=1}^{d}\binom{d}{k} \cdot n^{d-k} \cdot (-1)^k}{1-c}\\
           &=& \sum_{i=0}^d c_i \cdot n^i - c_d \cdot n^d + \frac{c \cdot c_d \cdot \sum_{k=1}^{d}\binom{d}{k} \cdot n^{d-k} \cdot (-1)^k}{1-c}\\
           &=& \sum_{i=0}^{d-1} c_i \cdot n^i + \frac{c \cdot c_d \cdot \sum_{k=1}^{d}\binom{d}{k} \cdot n^{d-k} \cdot (-1)^k}{1-c}\\
    \end{array}
  \]
  which is again a polynomial whose degree is at most $d-1$. By the induction
  hypothesis, there exists some $r' \in \QQ[n]$ such that $t = r' -
  c \cdot r'[n/n-1]$ for all $n \in \NN$. Let $r = s +
  r'$. Then we get:
  \[
    \begin{array}{ll@{\quad}l}
      &r - c \cdot r[n/n-1]\\
      =& s + r' - c \cdot s[n/n-1] - c \cdot r'[n/n-1] & (\text{by definition of } r)\\
      =& s + t - c \cdot s[n/n-1] & (\text{as } t = r' - c
  \cdot r'[n/n-1])\\
      =& t - t + q & (\text{by definition of } t)\\
      =& q
    \end{array}
  \]
}}

So \Cref{lem:pol} expresses a polynomial $q$ via the difference of
another polynomial $r$ at the positions $n$ and $n-1$, where the
additional factor $c$ can be chosen freely.
\report{A detailed proof of \Cref{lem:pol} can be found in \Cref{lem:pol_proof}. It}\paper{The proof of \Cref{lem:pol}}
is by induction on the degree of $q$ and its structure resembles the
structure of the following algorithm to compute $r$. Using the Binomial Theorem,
one can verify that  $q - s + c \cdot s[n/n-1]$ has a
smaller degree than $q$, which is crucial for the 
proof of
\Cref{lem:pol} and termination of \Cref{alg:compute-r}.

\medskip
\algorithmstyle{boxruled}
\begin{algorithm}[H]
  \KwIn{$q = \sum_{i=0}^d c_i \cdot n^i \in \QQ[n], \;\; c \in \QQ$}
  \KwResult{$r \in \QQ[n]$ such that $q = r - c \cdot r[n/n-1]$}
  \eIf{$d = 0$}{
    \leIf{$c = 1$}{\Return $c_0 \cdot n$}{\Return $\frac{c_0}{1-c}$}
  }{
    \leIf{$c = 1$}{$s \assign \frac{c_d \cdot n^{d+1}}{d+1}$}{$s \assign \frac{c_d \cdot n^d}{1-c}$}
    \Return $s + \text{compute\_}r(\, q - s + c \cdot s[n/n-1],\; c\,)$\;
  }
  \caption{compute\_$r$}
  \label{alg:compute-r}
\end{algorithm}
\medskip

\begin{ex}
  \label{ex:PolynomialDifference}
As an example, consider $q = 1$ (i.e., $c_0 = 1$) and $c = 4$. Then we search for an $r$ such
that $q = r - c \cdot r[n/n-1]$, i.e., $1 = r - 4 \cdot r[n/n-1]$. According to
\Cref{alg:compute-r}, the solution is $r = \frac{c_0}{1-c} = -\frac{1}{3}$.
  \end{ex}

\begin{lemma}[Closure of $\PE$ under Sums of Products and Exponentials]\label{lem:pe}If $m
  \in \NN$ and $p \in \PE[\vect{x}]$,
  then one can compute a
   $q \in \PE[\vect{x}]$ which is
  equivalent \report{\pagebreak} to $\sum_{i=1}^{n} m^{n - i} \cdot p[n/i-1]$.
\end{lemma}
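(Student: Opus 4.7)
The plan is to reduce, by linearity, to the case where $p$ is a single summand of the form $\charfun{\psi} \cdot \alpha \cdot n^a \cdot b^n$. Since $\alpha \in \Lin[\vect{x}]$ does not involve $n$, it factors out of the sum over $i$, leaving $\alpha \cdot \sum_{i=1}^n m^{n-i} \cdot \charfun{\psi[n/i-1]} \cdot (i-1)^a \cdot b^{i-1}$, so it suffices to show that this sum lies in $\PE[\vect{x}]$.

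Next I would split on the shape of $\psi$. If $\psi$ contains a positive literal $n = c_0$, then $\charfun{\psi[n/i-1]}$ is nonzero only when $i = c_0 + 1$, and the whole sum collapses to at most one term, which is straightforward to rewrite as a poly-exponential expression: as $\charfun{n > c_0} \cdot (c_0^a \, b^{c_0}/m^{c_0+1}) \cdot m^n$ when $m \geq 1$, and as $\charfun{n = c_0 + 1} \cdot c_0^a \, b^{c_0}$ when $m = 0$. Otherwise $\psi$ consists only of distinct negative literals $n \neq e_k$, and disjointness of the events $i = e_k + 1$ gives $\charfun{\psi[n/i-1]} = 1 - \sum_k \charfun{i = e_k + 1}$. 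This reduces the problem to computing the unconstrained sum $T_n := \sum_{i=1}^n m^{n-i} \cdot (i-1)^a \cdot b^{i-1}$ plus finitely many single-index contributions of the kind already handled.

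The core task is therefore to place $T_n$ in $\PE[\vect{x}]$, and the key tool is \Cref{lem:pol}. Apply that lemma to the polynomial $n^a$, with the lemma's parameter instantiated by $m/b \in \QQ$ (well-defined since $b \geq 1$), to obtain $r \in \QQ[n]$ satisfying $n^a = r(n) - (m/b) \cdot r(n-1)$. Substituting $n \mapsto i - 1$ and multiplying by $b^{i-1}$ yields $(i-1)^a \cdot b^{i-1} = r(i-1) \cdot b^{i-1} - m \cdot r(i-2) \cdot b^{i-2}$. Setting $U_i := m^{n-i} \cdot r(i-1) \cdot b^{i-1}$, the summand of $T_n$ becomes $U_i - U_{i-1}$, so the sum telescopes to
\[
T_n \;=\; U_n - U_0 \;=\; b^{n-1} \cdot r(n-1) \;-\; \tfrac{r(-1)}{b} \cdot m^n.
\]
Expanding $r(n-1)$ as a polynomial in $n$ and rewriting $b^{n-1}$ as $(1/b) \cdot b^n$ places the first term in $\PE[\vect{x}]$; the second term is already in $\PE$ when $m \geq 1$, and for $m = 0$ one replaces $m^n$ by the equivalent $\charfun{n = 0}$.

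The main obstacle, beyond this core telescoping computation, will be bookkeeping at the boundary: verifying the telescoped formula at $n = 0$, handling the corner case $m = 0$ consistently, and throughout checking that every intermediate expression really fits the syntactic shape of \Cref{def:poly-exp}---in particular, that the exponential bases remain in $\NN_{\geq 1}$ and the guards remain in $\CC$.
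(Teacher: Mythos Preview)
Your proposal is correct and follows essentially the same approach as the paper: reduce by linearity to a single addend, collapse the sum when $\psi$ contains a positive literal, and for the remaining unconstrained sum apply \Cref{lem:pol} with parameter $m/b$ to produce a telescoping sum. The organizational differences are cosmetic: you handle the negative-literal case by inclusion--exclusion ($\charfun{\psi} = 1 - \sum_k \charfun{n = e_k}$) where the paper instead splits off a finite prefix up to the largest constant in $\psi$, and you fold $m=0$ into the general argument via $0^n = \charfun{n=0}$ whereas the paper disposes of it as a separate case up front.
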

\begin{proof}
  Let $p = \sum_{j=1}^\ell \charfun{\psi_j} \cdot \alpha_j \cdot n^{a_j} \cdot b_j^n$.
  We have:
  \begin{equation}
    \label{eq:unfold-p}
    \textstyle
    \sum\limits_{i=1}^{n} m^{n - i} \cdot p[n/i-1]
    = \sum\limits_{j=1}^\ell \sum\limits_{i=1}^{n} \charfun{\psi_j}(i-1) \cdot m^{n - i} \cdot \alpha_j \cdot (i-1)^{a_j} \cdot b_j^{i-1}
  \end{equation}
  As $\PE[\vect{x}]$ is closed under addition, it suffices to show that we can compute an equivalent poly-exponential expression
  for any expression of the form
   \begin{equation}
    \label{eq:addend}
      \textstyle
    \sum_{i=1}^{n} \; \charfun{\psi}(i-1) \cdot m^{n - i} \cdot \alpha \cdot (i-1)^{a} \cdot b^{i-1}.
 \end{equation}

    We first regard the case $m=0$. Here, the expression \eqref{eq:addend} can be simplified to
    \begin{equation}
      \label{eq:m=0-1}
      \charfun{n \neq 0 } \cdot \charfun{\psi[n/n-1]} \cdot \alpha \cdot (n-1)^{a} \cdot b^{n-1}.
    \end{equation}
    Clearly, there is a $\psi' \in \CC$ such that $\charfun{\psi'}$ is equivalent to $\charfun{n \neq 0 } \cdot \charfun{\psi[n/n-1]}$.
Moreover, $\alpha \cdot b^{n-1} = \tfrac{\alpha}{b} \cdot b^n$ where $\tfrac{\alpha}{b}
\in \Lin[\vect{x}]$. Hence, due to the Binomial Theorem
\begin{equation}
  \label{eq:m=0-2}
  \textstyle
  \charfun{n \neq 0} \cdot
  \charfun{\psi[n/n-1]} \cdot \alpha \cdot (n-1)^{a} \cdot b^{n-1}\\
  = \sum_{i=0}^a
  \charfun{\psi'} \cdot
  \tfrac{\alpha}{b}\cdot \binom{a}{i} \cdot (-1)^i
  \cdot n^{a-i}
  \cdot b^n
\end{equation}
  which is a poly-exponential expression as $\tfrac{\alpha}{b}\cdot\binom{a}{i} \cdot (-1)^i \in \Lin[\vect{x}]$.

 From now on, let $m \geq 1$.
 If $\psi$ contains a positive literal $n = c$,
  then we get
  \begin{equation}
    \label{eq:positive-literal}
    \left.
    \begin{array}{ll@{\quad}l}
     &\sum_{i=1}^{n} \charfun{\psi}(i-1) \cdot m^{n - i} \cdot \alpha \cdot (i-1)^{a}
    \cdot b^{i-1}\\
 = &\sum_{i=1}^{n} \charfun{n > i-1} \cdot \charfun{\psi}(i-1) \cdot m^{n - i} \cdot \alpha \cdot (i-1)^{a}
    \cdot b^{i-1} & (\dagger)\\
 = &\charfun{n > c} \cdot \charfun{\psi}(c) \cdot m^{n - c - 1} \cdot \alpha \cdot c^{a}
 \cdot b^{c} & (\dagger\dagger)\\
 = &\left\{ \begin{array}{l@{\quad}l}
   0,&\text{if $\charfun{\psi}(c) = 0$}\\
   \charfun{n > c} \cdot \tfrac{1}{m^{c+1}}\cdot \alpha \cdot c^a \cdot b^c  \cdot m^{n}
   ,&\text{if $\charfun{\psi}(c) = 1$}
 \end{array}
 \right.\\[.4cm]
  \in &\PE[\vect{x}] \quad \text{(since $\tfrac{1}{m^{c+1}}\cdot \alpha \cdot c^a \cdot b^c  \in
    \Lin[\vect{x}]$).}
    \end{array}
    \right\}
  \end{equation}
  The step marked with $(\dagger)$ holds as we have $\charfun{n > i-1}=1$ for
  all $i \in \{1,\ldots,n\}$ and the step marked with $(\dagger\dagger)$ holds
  since $i \neq c+1$ implies $\charfun{\psi}(i-1) = 0$.
 If $\psi$ does not contain a positive literal, then
 let $c$ be the maximal constant that occurs in $\psi$ or $-1$ if $\psi$ is empty.
We get:
\begin{equation}
  \label{eq:negative-1}
  \left.
  \begin{array}{ll@{\quad}l}
    &\sum_{i=1}^{n} \charfun{\psi}(i-1) \cdot m^{n - i} \cdot \alpha \cdot (i-1)^{a} \cdot
    b^{i-1}\\
 = &\sum_{i=1}^{n} \charfun{n > i-1} \cdot \charfun{\psi}(i-1) \cdot m^{n - i} \cdot \alpha \cdot (i-1)^{a}
    \cdot b^{i-1} & (\dagger)\\
 = &\sum_{i=1}^{c+1} \charfun{n > i-1} \cdot \charfun{\psi}(i-1) \cdot m^{n - i} \cdot \alpha \cdot (i-1)^{a}
 \cdot b^{i-1}\\
 &+ \sum_{i=c+2}^{n}   m^{n - i} \cdot \alpha \cdot (i-1)^{a}
 \cdot b^{i-1}
  \end{array}
  \right\}
\end{equation}
Again, the step marked with $(\dagger)$ holds since we have $\charfun{n > i-1}=1$
for all $i \in \{1,\ldots,n\}$. The last step holds as $i \geq c+2$ implies
$\charfun{\psi}(i-1) = 1$.  Similar to the case where $\psi$ contains a positive
literal, we can compute a poly-exponential expression which is equivalent to the
first addend. We have
\begin{equation}
  \label{eq:negative-2}
  \begin{array}{ll}
    &\sum_{i=1}^{c+1} \charfun{n > i-1} \cdot \charfun{\psi}(i-1) \cdot m^{n - i} \cdot
    \alpha \cdot (i-1)^{a} \cdot b^{i-1} \\
    =\ &\sum\limits_{\substack{1 \leq i \leq c+1 \\ \charfun{\psi}(i-1) = 1}} \charfun{n > i-1}
    \cdot \tfrac{1}{m^i} \cdot \alpha  \cdot (i-1)^{a} \cdot b^{i-1} \cdot m^n\\[.4cm]
  \end{array}
\end{equation}
which is a poly-exponential expression as $\tfrac{1}{m^{i}}\cdot \alpha \cdot (i-1)^a \cdot b^{i-1}  \in \Lin[\vect{x}]$.
For the second addend, we have:
\begin{equation}
  \label{eq:negative-3}
  \left.
    \begin{array}{@{\hspace*{-.6cm}}lll}
      &\sum_{i=c+2}^{n} m^{n - i} \cdot \alpha \cdot (i-1)^{a} \cdot b^{i-1}\\
      =&\frac{\alpha}{b} \cdot m^n \cdot \sum_{i=c+2}^{n} (i-1)^{a} \cdot \left(\frac{b}{m}\right)^i\\
      =&\frac{\alpha}{b} \cdot m^n \cdot \sum_{i=c+2}^{n} (r[n/i] - \frac{m}{b} \cdot
      r[n/i-1]) \cdot \left(\frac{b}{m}\right)^{i} \hfill \text{(\Cref{lem:pol} with
        $c = \frac{m}{b}$)\hspace*{-.1cm}}\\
      =&\frac{\alpha}{b} \cdot m^n \cdot \left(\sum_{i=c+2}^{n} r[n/i] \cdot \left(\frac{b}{m}\right)^{i} - \sum_{i=c+2}^{n} \frac{m}{b} \cdot r[n/i-1] \cdot \left(\frac{b}{m}\right)^{i}\right)\\
      =&\frac{\alpha}{b} \cdot m^n \cdot \left(\sum_{i=c+2}^{n} r[n/i] \cdot \left(\frac{b}{m}\right)^{i} - \sum_{i=c+1}^{n-1} r[n/i] \cdot \left(\frac{b}{m}\right)^{i}\right)\\
      =&\frac{\alpha}{b} \cdot m^n \cdot \charfun{n > c+1} \cdot (r \cdot
      \left(\frac{b}{m}\right)^{n} - r[n/c+1] \cdot \left(\frac{b}{m}\right)^{c+1})\\
      =&\charfun{n > c+1} \cdot \frac{\alpha}{b} \cdot r \cdot b^{n} - \charfun{n > c+1} \cdot r[n/c+1] \cdot \left(\frac{b}{m}\right)^{c+1} \cdot \frac{\alpha}{b} \cdot m^n
    \end{array}
    \right\}\hspace*{-.5cm}
  \end{equation}
  \Cref{lem:pol} ensures $r \in \QQ[n]$, i.e., we have $r =
  \sum_{i=0}^{d_r} m_i \cdot n^i$ for some $d_r \in \NN$ and $m_i \in \QQ$. Thus,
  $r[n/c+1] \cdot \left(\frac{b}{m}\right)^{c+1} \cdot \frac{\alpha}{b} \in
  \Lin[\vect{x}]$ which implies $\charfun{n > c+1} \cdot r[n/c+1] \cdot
  \left(\frac{b}{m}\right)^{c+1} \cdot \frac{\alpha}{b} \cdot m^n \in \PE[\vect{x}]$. It
  remains to show that the addend $\charfun{n > c+1} \cdot \frac{\alpha}{b} \cdot r \cdot
  b^{n}$ is equivalent to a poly-exponential expression. As $\frac{\alpha}{b} \cdot m_i \in \Lin[\vect{x}]$, we have
  \begin{equation}
    \label{eq:negative-4}
    \textstyle
    \charfun{n > c+1} \cdot \frac{\alpha}{b} \cdot r \cdot b^{n}
    = \sum_{i=0}^{d_r} \charfun{n > c+1} \cdot \frac{\alpha}{b} \cdot m_i \cdot n^i \cdot b^{n}
    \in  \PE[\vect{x}].
  \end{equation}
\end{proof}

\noindent
The proof of \Cref{lem:pe} gives rise to a corresponding algorithm.

\medskip
\algorithmstyle{boxruled}
\begin{algorithm}[H]
  \KwIn{$m \in \NN, \;\; p \in \PE[\vect{x}]$}
  \KwResult{$q \in \PE[\vect{x}]$ which is equivalent to $\sum_{i=1}^n m^{n-i} \cdot p[n/i-1]$}
  rearrange $\sum_{i=1}^n m^{n-i} \cdot p[n/i-1]$ to $\sum_{j=1}^\ell p_j$ as in \eqref{eq:unfold-p}\;
  \ForEach{$p_j \in \{p_1,\ldots,p_\ell\}$}{
    \lIf{$m = 0$}{compute $q_j$ as in \eqref{eq:m=0-1} and \eqref{eq:m=0-2}}
    \lElseIf{$p_j = \charfun{\ldots \land n = c \land \ldots} \cdot \ldots$}{compute $q_j$ as in \eqref{eq:positive-literal}}
    \Else{
      \begin{itemize}[label=$\bullet$]
      \item split $p_j$ into two sums $p_{j,1}$ and $p_{j,2}$ as in \eqref{eq:negative-1}
      \item compute $q_{j,1}$ from $p_{j,1}$ as in \eqref{eq:negative-2}
      \item compute $q_{j,2}$ from $p_{j,2}$ as in \eqref{eq:negative-3} and \eqref{eq:negative-4} using \Cref{alg:compute-r}
      \item $q_j \assign q_{j,1} + q_{j,2}$
      \end{itemize}\vspace{-\topsep} 
    }
  }
  \Return $\sum_{j=1}^\ell q_j$\;
  \caption{symbolic\_sum}
  \label{alg:closed-forms-for-symbolic-sums}
\end{algorithm}
\medskip

\begin{ex}
  \label{ex:sum}
 We compute an equivalent poly-exponential expression for
  \begin{equation}
    \label{original}
    \textstyle
    \sum_{i=1}^{n} 4^{n-i} \cdot (\charfun{n=0} \cdot 2 \cdot w \; + \;
    \charfun{n \neq 0} \cdot 4
    \; - \; 2)\;[n/i-1]
  \end{equation}
  where $w$ is a variable. (It will later on be needed to compute a closed form for \Cref{ex:chained}, see \Cref{ex:closed}.) According to
\Cref{alg:closed-forms-for-symbolic-sums} and \eqref{eq:unfold-p}, we get
  \[
    \begin{array}{lll}
      & \sum_{i=1}^{n} 4^{n-i} \cdot (\charfun{n=0} \cdot 2 \cdot w \; + \;
      \charfun{n \neq 0} \cdot 4 \; - \; 2)\;[n/i-1]\\
      =& \sum_{i=1}^{n} 4^{n-i} \cdot (\charfun{i-1=0} \cdot 2 \cdot w \; + \;
      \charfun{i-1 \neq 0} \cdot 4 \; - \; 2)\\
      =& p_1 + p_2 + p_3
    \end{array}\]
    with $p_1 = \sum_{i=1}^{n}
    \charfun{i-1=0} \cdot 4^{n-i} \cdot 2 \cdot w$, $p_2 = \sum_{i=1}^{n}
    \charfun{i-1 \neq 0} \cdot 4^{n-i} \cdot 4$, and $p_3 =
    \sum_{i=1}^{n} 4^{n-i} \cdot (- 2)$.
    We search for $q_1, q_2,
    q_3 \in \PE[w]$ that are equivalent to $p_1, p_2, p_3$, i.e., $q_1 + q_2 + q_3$
    is equivalent to \eqref{original}. We only show how to
    compute $q_2$\report{. See \Cref{ex:sum-full} for the computation of $q_1 = \charfun{n \neq 0}
    \cdot \tfrac{1}{2} \cdot w \cdot 4^{n}$ and $q_3 = \tfrac{2}{3} - \tfrac{2}{3} \cdot
    4^n$.}\paper{ (and omit the computation of  $q_1 = \charfun{n \neq 0}
    \cdot \tfrac{1}{2} \cdot w \cdot 4^{n}$ and $q_3 = \tfrac{2}{3} - \tfrac{2}{3} \cdot
    4^n$).}
    Analogously to \eqref{eq:negative-1}, we get:
    \[
    \begin{array}{lllr}
      &&    \sum_{i=1}^{n}
\charfun{i-1 \neq 0} \cdot 4^{n-i} \cdot 4 & \\
&=&    \sum_{i=1}^{n} \charfun{n > i-1} \cdot
\charfun{i-1 \neq 0} \cdot 4^{n-i} \cdot 4\\
&=&    \sum_{i=1}^{1} \charfun{n > i-1} \cdot
  \charfun{i-1 \neq 0} \cdot 4^{n-1} \cdot 4
  \quad + \quad
    \sum_{i=2}^{n}  4^{n-i} \cdot 4
    \end{array}
  \]
  The next step is to rearrange the first sum as in \eqref{eq:negative-2}. In our example, it directly simplifies to
  $0$ and hence we obtain
  \[
    \textstyle
    \sum_{i=1}^{1} \charfun{n > i-1} \cdot \charfun{i-1 \neq 0} \cdot 4^{n-1} \cdot 4 + \sum_{i=2}^{n}  4^{n-i} \cdot 4 = \sum_{i=2}^{n}  4^{n-i} \cdot 4.
  \]
  Finally, by applying the steps from \eqref{eq:negative-3} we get:
  \[
    \begin{array}{lllrr}
      && \sum_{i=2}^{n}  4^{n-i} \cdot 4 & \\
&=& 4 \cdot 4^n \cdot
   \sum_{i=2}^{n}  \left(\tfrac{1}{4}\right)^i &\\
&=& 4 \cdot 4^n \cdot
   \sum_{i=2}^{n}
 \left(-\tfrac{1}{3} - 4 \cdot
\left(-\tfrac{1}{3}\right)\right) \cdot \left(\tfrac{1}{4}\right)^i &\qquad (\dagger)\\
&=&  4 \cdot 4^n \cdot \left(
\sum_{i=2}^{n} \left(-\tfrac{1}{3}\right)  \cdot \left(\tfrac{1}{4}\right)^i \; - \;
\sum_{i=2}^{n} 4 \cdot \left(-\tfrac{1}{3}\right)  \cdot \left(\tfrac{1}{4}\right)^i
\right)\\
&=&  4 \cdot 4^n \cdot \left(
\sum_{i=2}^{n} \left(-\tfrac{1}{3}\right)  \cdot \left(\tfrac{1}{4}\right)^i \; - \;
\sum_{i=1}^{n-1}  \left(-\tfrac{1}{3}\right)  \cdot \left(\tfrac{1}{4}\right)^i
\right)\\
   &=& 4 \cdot 4^n \cdot   \charfun{n > 1} \cdot
\left(\left(-\tfrac{1}{3}\right)  \cdot \left(\tfrac{1}{4}\right)^n \; - \;
 \left(-\tfrac{1}{3}\right) \cdot \tfrac{1}{4}
 \right)\\
 &=& \charfun{n > 1} \cdot \left(-\tfrac{4}{3}\right) \; + \;  \charfun{n > 1} \cdot
    \tfrac{1}{3} \cdot 4^n\\
      &=&q_2
    \end{array}\]
The step marked with $(\dagger)$ holds by \Cref{lem:pol} with $q = 1$ and $c = 4$. Thus, we have $r =
  -\tfrac{1}{3}$, cf.\ \Cref{ex:PolynomialDifference}.
\end{ex}

Recall that our goal is to compute closed forms for loops. As a first step,
instead of the $n$-fold
update function $h(n,\vect{x}) = f^n(\vect{x})$ of \eqref{loop} where $f$ is the
update of \eqref{loop}, we
consider a recursive update function for a single variable $x \in \vect{x}$:
\[
  \textstyle
  g(0,\vect{x}) = x \quad \text{and} \quad g(n,\vect{x}) = m \cdot g(n-1, \vect{x}) + p[n/n-1] \quad \text{for all $n > 0$}
\]
Here,
$m \in \NN$ and $p \in \PE[\vect{x}]$.
Using \Cref{lem:pe}, it is easy to show that $g$
can be represented by a poly-exponential expression.
\begin{lemma}[Closed Form for Single Variables]
  \label{thm:closed-under-linear-polynomials}
  If $x \in \vect{x}$, $m \in \NN$, and $p \in \PE[\vect{x}]$, then one can compute a
  $\,q \in \PE[\vect{x}]$ which
  satisfies
  \[
    \textstyle
    q\,[n/0] = x \quad \text{and} \quad q = (m \cdot q + p)\;[n/n-1] \quad \text{for all } n > 0.
  \]
\end{lemma}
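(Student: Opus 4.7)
The plan is to unroll the recurrence to produce an explicit formula, then appeal to \Cref{lem:pe} to show it lies in $\PE[\vect{x}]$. Solving the recursion $g(0,\vect{x}) = x$ and $g(n,\vect{x}) = m \cdot g(n-1,\vect{x}) + p[n/n-1]$ in the usual way gives
\[
  g(n,\vect{x}) \;=\; m^n \cdot x \;+\; \sum_{i=1}^{n} m^{n-i} \cdot p[n/i-1],
\]
and I would take this as the candidate $q$. The goal is then to show both that this expression is (equivalent to) a poly-exponential expression and that it satisfies the two displayed equalities.

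For the first summand $m^n \cdot x$, I would split on whether $m \geq 1$ or $m = 0$. In the former case, $\charfun{\mathit{true}} \cdot x \cdot n^0 \cdot m^n$ directly fits \Cref{def:poly-exp} (with $\alpha = x \in \Lin[\vect{x}]$, $a = 0$, $b = m \in \NN_{\geq 1}$). In the latter case, $m^n \cdot x$ is $x$ when $n = 0$ and $0$ otherwise, so it is equivalent to $\charfun{n = 0} \cdot x \cdot n^0 \cdot 1^n$, again a poly-exponential expression. For the second summand, \Cref{lem:pe} directly yields a $q' \in \PE[\vect{x}]$ equivalent to $\sum_{i=1}^n m^{n-i} \cdot p[n/i-1]$. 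Since $\PE[\vect{x}]$ is clearly closed under addition, setting $q$ to be the sum of these two poly-exponential expressions gives $q \in \PE[\vect{x}]$.

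It remains to verify the two equalities. For $q[n/0] = x$: the empty sum $\sum_{i=1}^{0}\dots$ is $0$, and $m^0 \cdot x = x$, so the claim holds (one should confirm that the $q'$ produced by \Cref{lem:pe} also evaluates to $0$ at $n = 0$, which follows from inspecting the construction in that proof, since each $\sum_{i=1}^{0}$ in the case analysis vanishes). For the recurrence, a direct calculation gives
\[
  (m \cdot q + p)[n/n-1]
  = m \cdot \Bigl(m^{n-1} x + \sum_{i=1}^{n-1} m^{n-1-i} p[n/i-1]\Bigr) + p[n/n-1]
  = m^n x + \sum_{i=1}^{n} m^{n-i} p[n/i-1]
  = q,
\]
which is the desired identity.

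There is really no hard step here; the substantive work was already absorbed into \Cref{lem:pe}. The one minor trap is the side condition $b \in \NN_{\geq 1}$ in the definition of poly-exponential expressions, which forces the small case analysis on $m$ above; everything else is routine and produces the explicit algorithm: run \Cref{alg:closed-forms-for-symbolic-sums} on $m$ and $p$ to obtain $q'$, and return $m^n \cdot x + q'$ (rewritten as $\charfun{n=0} \cdot x + q'$ if $m = 0$).
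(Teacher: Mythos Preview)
Your proposal is correct and follows essentially the same approach as the paper: both derive the explicit formula $q = m^n \cdot x + \sum_{i=1}^{n} m^{n-i}\, p[n/i-1]$, verify it satisfies the two required equalities, invoke \Cref{lem:pe} for the sum, and handle $m^n \cdot x$ via the same case split on $m = 0$ versus $m \geq 1$. Your parenthetical about checking $q'[n/0] = 0$ is unnecessary, since \Cref{lem:pe} guarantees that $q'$ is \emph{equivalent} to the sum for all $n$, and the empty sum at $n=0$ is $0$.
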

\begin{proof}
   It suffices to find a $q \in \PE[\vect{x}]$
  that satisfies
  \begin{equation}\label{eq:closed-non-zero}
    \textstyle
    q = m^n \cdot x + \sum_{i=1}^{n} m^{n-i} \cdot p[n/i-1].
  \end{equation}
   To see why \eqref{eq:closed-non-zero} is sufficient, note that
\eqref{eq:closed-non-zero} implies
\[
  \textstyle
  q[n/0] \quad = \quad m^0 \cdot x + \sum\nolimits_{i=1}^{0} m^{0-i} \cdot p[n/i-1]
  \quad =\quad x
\]
    and for $n > 0$, \eqref{eq:closed-non-zero} implies \pagebreak
    \[
    \begin{array}{llll}
      q &=& m^{n}  \cdot x + \sum_{i=1}^{n} m^{n-i} \cdot p[n/i-1]\\
             &=& m^{n} \cdot x + \left(\sum_{i=1}^{n-1} m^{n-i} \cdot p[n/i-1]\right) + p[n/n-1]\\
             &=& m \cdot \left(m^{n-1} \cdot x + \sum_{i=1}^{n-1} m^{n-i-1} \cdot p[n/i-1]\right) + p[n/n-1]\\
             &=& m \cdot q[n/n-1] + p[n/n-1]\\
             &=& (m \cdot q + p)[n/n-1].
    \end{array}
  \]
  By \Cref{lem:pe}, we can compute a $q' \in \PE[\vect{x}]$ such that
  \[
    \textstyle
    m^n \cdot x + \sum_{i=1}^{n} m^{n-i} \cdot p[n/i-1] \quad =
    \quad  m^n \cdot x + q'.
  \]
  Moreover,
\vspace*{-.2cm}
  \begin{align}
    \label{eq:closed-non-zero-1}
  &\text{if $m = 0$, then $m^n \cdot x = \charfun{n = 0} \cdot x \in \PE[\vect{x}]$ and}\\
    \label{eq:closed-non-zero-2}
    &\text{if $m > 0$, then $m^n \cdot x \in \PE[\vect{x}]$.}
  \end{align}
  So both addends are equivalent to poly-exponential expressions.
\end{proof}

\begin{ex}
  \label{ex:single-closed}
  We show how to compute the closed forms for the variables $w$ and $x$ from \Cref{ex:chained}.
  We first consider the assignment $w \assign 2$,
i.e., we want to compute a $q_w \in \PE[w,x,y,z]$ with $q_w [n/0] = w$ and $q_w = (m_w
\cdot q_w + p_w)\,[n/n-1]$ for $n > 0$, where $m_w = 0$ and $p_w = 2$. According to
\eqref{eq:closed-non-zero}  and \eqref{eq:closed-non-zero-1}, $q_w$ is
  \[\textstyle
  m_w^n \cdot w + \sum_{i=1}^{n} m_w^{n-i} \cdot p_w[n/i-1]  =
  0^n \cdot w + \sum_{i=1}^{n} 0^{n-i} \cdot 2  =
    \charfun{n = 0} \cdot w + \charfun{n \neq 0} \cdot 2.
  \]
  For the assignment $x \assign x + 2$,
we search for a $q_x$ such that $q_x[n/0] = x$ and $q_x  =  (m_x
\cdot q_x + p_x)\,[n/n-1]$ for $n > 0$, where
$m_x = 1$ and $p_x = 2$. By \eqref{eq:closed-non-zero}, $q_x$ is
  \[\textstyle
  m_x^n \cdot x + \sum_{i=1}^{n} m_x^{n-i} \cdot p_x[n/i-1]   =
    1^n \cdot x + \sum_{i=1}^{n} 1^{n-i} \cdot 2
     =  x + 2
    \cdot n.
  \]
\end{ex}

The restriction to triangular matrices now allows us to
generalize \Cref{thm:closed-under-linear-polynomials} to vectors of variables.
The reason is that due to triangularity,
the update of each program variable $x_i$ only depends on the previous values of $x_1,\ldots,x_{i}$. So when
regarding $x_i$, we can assume that we already know
the closed forms for $x_1,\ldots,x_{i-1}$.
This allows us to find closed forms for one variable after the other by
applying \Cref{thm:closed-under-linear-polynomials} repeatedly.
In other words, it allows us to find a vector $\vect{q}$ of poly-exponential
expressions that satisfies
\[
  \textstyle
  \vect{q}\,[n/0] = \vect{x}\quad \text{and} \quad \vect{q} = A\, \vect{q}[n/n-1] + \vect{a} \quad \text{for all } n > 0.
\]
To prove this claim, we show the more general \Cref{lem:polynomial-solutions}.
For all $i_1,\ldots,i_k \in \{1, \ldots, m\}$, we define
$[z_1,\ldots,z_m]_{i_1,\ldots,i_k} = [z_{i_1},\ldots,z_{i_k}]$ (and the notation
$\vect{y}_{i_1,\ldots,i_k}$ for column vectors is defined analogously).
Moreover, for a matrix $A$, $A_{i}$ is $A$'s $i^{th}$ row and
$A_{i_1,\ldots,i_n;j_1,\ldots,j_k}$ is the matrix with rows
$(A_{i_1})_{j_1,\ldots,j_k}, \ldots, (A_{i_n})_{j_1,\ldots,j_k}$. So for $A =
\begin{bmatrix}
  a_{1,1}&a_{1,2}&a_{1,3}\\
  a_{2,1}&a_{2,2}&a_{2,3}\\
  a_{3,1}&a_{3,2}&a_{3,3}
\end{bmatrix}$,
we have $A_{1,2;1,3} = \begin{bmatrix}
  a_{1,1}&a_{1,3}\\
  a_{2,1}&a_{2,3}
\end{bmatrix}$.

\begin{lemma}[Closed Forms for Vectors of Variables]
  \label{lem:polynomial-solutions}
  If $\vect{x}$ is a vector of at least $d \geq 1$ pairwise different variables, 
  $A \in \ZZ^{d \times d}$ is triangular with $A_{i;i} \geq  0$ for all $1 \leq i \leq d$,
  and $\vect{p} \in \PE[\vect{x}]^d$,
  then one can compute  $\vect{q} \in \PE[\vect{x}]^d$ such that:
  \begin{align}
    \label{eq:closed-vector-1}
    \vect{q}\,[n/0] &= \vect{x}_{1,\ldots,d}\quad \text{and}\\
    \label{eq:closed-vector-2}
    \vect{q} &= (A\, \vect{q} + \vect{p})\;[n/n-1] \quad \text{for all } n > 0
  \end{align}
\end{lemma}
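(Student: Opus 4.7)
My plan is to prove the lemma by induction on $i \in \{1, \ldots, d\}$, constructing the entries $q_1, \ldots, q_d$ of $\vect{q}$ one at a time in the order dictated by the triangular structure of $A$. Without loss of generality I assume $A$ is lower triangular (the upper triangular case is symmetric, by reversing the order of indices). The key point is that in the target equation \eqref{eq:closed-vector-2}, the $i$-th component of $A\,\vect{q} + \vect{p}$ involves only $q_1, \ldots, q_i$ thanks to triangularity, so once the first $i-1$ components have been built the recurrence for $q_i$ collapses to a scalar recurrence of exactly the form handled by \Cref{thm:closed-under-linear-polynomials}.

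At stage $i$, I assume $q_1, \ldots, q_{i-1} \in \PE[\vect{x}]$ have already been constructed so that $q_j[n/0] = x_j$ and $q_j = (A\,\vect{q} + \vect{p})_j[n/n-1]$ for $n > 0$ and all $j < i$. I then set
\[
  m_i \;=\; A_{i;i} \in \NN, \qquad
  \tilde{p}_i \;=\; p_i + \sum_{j=1}^{i-1} A_{i;j} \cdot q_j.
\]
Since $\PE[\vect{x}]$ is closed under addition and under multiplication by rational scalars (the latter simply rescales the coefficients $\alpha_j \in \Lin[\vect{x}]$ in each summand), it follows that $\tilde{p}_i \in \PE[\vect{x}]$. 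Applying \Cref{thm:closed-under-linear-polynomials} with $x = x_i$, $m = m_i$, and $p = \tilde{p}_i$ yields a computable $q_i \in \PE[\vect{x}]$ satisfying $q_i[n/0] = x_i$ and $q_i = (m_i \cdot q_i + \tilde{p}_i)[n/n-1]$ for $n > 0$. Reintroducing the zero terms $A_{i;j}\,q_j$ for $j > i$ (which vanish by triangularity), the latter identity is exactly the $i$-th component of \eqref{eq:closed-vector-2}.

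Assembling $\vect{q} = (q_1, \ldots, q_d)^T$, both \eqref{eq:closed-vector-1} and \eqref{eq:closed-vector-2} hold componentwise, completing the induction. The main conceptual obstacle is the verification that $\tilde{p}_i$ is a poly-exponential expression, which hinges on the closure properties of $\PE[\vect{x}]$ and is what enables the reduction from a coupled $d$-dimensional recurrence to $d$ successive scalar recurrences. This is precisely where triangularity of $A$ is essential: without it, the recurrence for $q_i$ would still involve unknowns $q_j$ with $j > i$, so no such reduction would be available. The non-negativity of the diagonal, in turn, is exactly what guarantees $m_i \in \NN$, matching the hypothesis of \Cref{thm:closed-under-linear-polynomials}.
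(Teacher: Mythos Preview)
Your proof is correct and follows essentially the same approach as the paper: reduce the vector recurrence to a sequence of scalar recurrences via triangularity, invoking \Cref{thm:closed-under-linear-polynomials} for each component. The only cosmetic difference is that the paper organizes the argument as induction on the dimension $d$ (peeling off $\vect{q}_1$ and recursing on the $(d-1)\times(d-1)$ subsystem with a modified $\vect{p}$), whereas you induct on the component index $i$; the two inductions unwind to the same sequence of applications of \Cref{thm:closed-under-linear-polynomials}.
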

\begin{proof}
  Assume that $A$ is lower triangular \pagebreak
  (the case that $A$ is upper triangular works analogously).
  We use induction on $d$.
   For any $d \geq 1$ we have:
  \[
    \begin{array}{llllll}
      &\vect{q} &=& (A\, \vect{q} + \vect{p})\;[n/n-1]\\
      \iff& \vect{q}_j &=& (A_{j} \cdot \vect{q} + \vect{p}_j)\;[n/n-1] & \text{for all } 1 \leq j \leq d\\
      \iff& \vect{q}_j &=& (A_{j;2,\ldots,d} \cdot \vect{q}_{2,\ldots,d} + A_{j;1} \cdot \vect{q}_1 + \vect{p}_j)\;[n/n-1] & \text{for all } 1 \leq j \leq d\\
      \iff& \vect{q}_1 &=& (A_{1;2,\ldots,d} \cdot \vect{q}_{2,\ldots,d} + A_{1;1} \cdot \vect{q}_1 + \vect{p}_1)\;[n/n-1] & \land\\
      & \vect{q}_j &=& (A_{j;2,\ldots,d} \cdot \vect{q}_{2,\ldots,d} + A_{j;1} \cdot \vect{q}_1 + \vect{p}_j)\;[n/n-1] & \text{for all } 1 < j \leq d\\
      \iff& \vect{q}_1 &=& (A_{1;1} \cdot \vect{q}_1 + \vect{p}_1)\;[n/n-1] & \land\\
      & \vect{q}_j &=& (A_{j;2,\ldots,d} \cdot \vect{q}_{2,\ldots,d} + A_{j;1} \cdot \vect{q}_1 + \vect{p}_j)\;[n/n-1] & \text{for all } 1 < j \leq d
    \end{array}
  \]
  The last step holds as $A$ is lower triangular.
  By \Cref{thm:closed-under-linear-polynomials}, we can compute a $\vect{q}_1 \in \PE[\vect{x}]$
  that satisfies
  \[
    \textstyle
    \vect{q}_1[n/0] = \vect{x}_1 \quad \text{and} \quad \vect{q}_1 = (A_{1;1} \cdot \vect{q}_1 + \vect{p}_1)\;[n/n-1] \quad \text{for all } n > 0.
  \]

  In the induction base ($d = 1$), there is no $j$ with $1 < j \leq d$. In
  the induction step ($d > 1$),
it remains to show that we can compute $\vect{q}_{2,\ldots,d}$ such that
\[
  \textstyle
  \vect{q}_j[n/0] = \vect{x}_j \quad \text{and} \quad
  \vect{q}_j = (A_{j;2,\ldots,d} \cdot \vect{q}_{2,\ldots,d} + A_{j;1} \cdot \vect{q}_1
  + \vect{p}_j)\;[n/n-1]
\]
for all $n > 0$ and all $1 < j \leq d$, which is equivalent to
  \begin{align*}
  \vect{q}_{2,\ldots,d}[n/0] &= \vect{x}_{2,\ldots,d} \quad \text{and}\\[-1.3em]
    \vect{q}_{2,\ldots,d} &= (A_{2,\ldots,d;2,\ldots,d} \cdot \vect{q}_{2,\ldots,d}
    + \begin{bmatrix}A_{2;1}\\\vdots\\A_{d;1}\end{bmatrix}
     \cdot \vect{q}_1 + \vect{p}_{2,\ldots,d})\;[n/n-1]
  \end{align*}
  for all $n>0$.
  As $A_{j;1} \cdot \vect{q}_1 + \vect{p}_j \in \PE[\vect{x}]$ for each $2 \leq j \leq d$, the claim follows from the induction hypothesis.
 \end{proof}

Together, \Cref{thm:closed-under-linear-polynomials,lem:polynomial-solutions}
and their proofs give rise to the following algorithm to compute a solution for
\eqref{eq:closed-vector-1} and \eqref{eq:closed-vector-2}.
It computes a closed form $\vect{q}_1$ for $\vect{x}_1$ as in the proof of \Cref{thm:closed-under-linear-polynomials}, constructs the argument
$\vect{p}$ for the recursive call based on $A$, $\vect{q}_1$, and the current
value of $\vect{p}$ as in the proof of \Cref{lem:polynomial-solutions}, and then determines the closed form for $\vect{x}_{2, \ldots, d}$ recursively.

\medskip
\algorithmstyle{boxruled}
\begin{algorithm}[H]
  \KwIn{$\vect{x}_{1,\ldots,d}, \;\; A \in \ZZ^{d \times d}$ where $A_{i;i} \geq 0$ for all $1 \leq i
    \leq d$, $\;\; \vect{p} \in \PE[\vect{x}]^d$}
  \KwResult{$\vect{q} \in \PE[\vect{x}]^d$ which satisfies \eqref{eq:closed-vector-1} \& \eqref{eq:closed-vector-2} for the given $\vect{x}, A$, and $\vect{p}$}
  $q \assign \text{symbolic\_sum}(A_{1;1},\vect{p}_1)$ (cf.\ \Cref{alg:closed-forms-for-symbolic-sums})\;
  \leIf{$A_{1;1} = 0$}{$\vect{q}_1 \assign \charfun{n=0} \cdot \vect{x}_1 + q$}{$\vect{q}_1 \assign A_{1;1}^n \cdot \vect{x}_1 + q$ (cf.\ (\ref{eq:closed-non-zero}--\ref{eq:closed-non-zero-2}))}
  \If{$d>1$}{
    \vspace{-0.4cm}
    $\vect{q}_{2,\ldots,d} \assign
    \text{closed\_form}(\vect{x}_{2,\ldots,d}, A_{2,\ldots,d;2,\ldots,d},
    \begin{bmatrix}A_{2;1}\\\vdots\\A_{d;1}\end{bmatrix} \cdot
    \vect{q}_1 + \vect{p}_{2,\ldots,d})$
    \vspace{-0.4cm}
  }
  \Return $\vect{q}$\;
  \caption{closed\_form}
  \label{alg:closed-form-vector}
\end{algorithm}
\medskip

\noindent
We can now prove the main theorem of this section.

\begin{theorem}[Closed Forms for nnt-Loops]
  \label{lem:closed}
  One can compute a closed form for every nnt-loop. \pagebreak In other words, if
$f:\ZZ^d \to \ZZ^d$ is the update function of an nnt-loop with the variables $\vect{x}$,
then one can compute a $\vect{q} \in \PE[\vect{x}]^d$ such that $\vect{q}[n/c] = f^c(\vect{x})$ for all
$c \in \NN$.
\end{theorem}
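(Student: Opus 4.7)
My plan is to invoke Lemma~\ref{lem:polynomial-solutions} essentially as a black box with $\vect{p} = \vect{a}$. First I would verify the two hypotheses: since we are dealing with an nnt-loop, the update matrix $A$ is triangular with $A_{i;i} \geq 0$ for all $i$; and the constant vector $\vect{a} \in \ZZ^d$ trivially lies in $\PE[\vect{x}]^d$, because each integer $\vect{a}_i \in \Lin[\vect{x}]$ can be written as a poly-exponential expression by taking $\ell = 1$, an empty conjunction $\psi_1$, $a_1 = 0$, and $b_1 = 1$ in Definition~\ref{def:poly-exp}. Lemma~\ref{lem:polynomial-solutions} then delivers a computable $\vect{q} \in \PE[\vect{x}]^d$ satisfying $\vect{q}[n/0] = \vect{x}$ and $\vect{q} = (A\vect{q} + \vect{a})[n/n-1]$ for all $n > 0$.

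The remaining step is to bridge the gap between this recursive characterization and the explicit claim $\vect{q}[n/c] = f^c(\vect{x})$ for every $c \in \NN$. I would do this by a short induction on $c$. The base case $c = 0$ is the first equation, since $f^0(\vect{x}) = \vect{x}$. For the step, substituting $n = c+1$ into the recursive equation yields $\vect{q}[n/c+1] = (A\vect{q} + \vect{a})[n/c] = A \cdot \vect{q}[n/c] + \vect{a}$, which by the induction hypothesis and the definition of $f$ equals $A \cdot f^c(\vect{x}) + \vect{a} = f^{c+1}(\vect{x})$.

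The hard part is not located in this final theorem: it has already been absorbed into Lemma~\ref{lem:polynomial-solutions}, whose proof in turn reduces via triangularity to the single-variable case Lemma~\ref{thm:closed-under-linear-polynomials}, which itself rests on the closure property Lemma~\ref{lem:pe}. Concretely, algorithm \texttt{closed\_form} preceding the theorem performs the actual computation when invoked with arguments $\vect{x}_{1,\ldots,d}$, $A$, and $\vect{p} = \vect{a}$, so the argument essentially amounts to packaging this call together with the induction above.
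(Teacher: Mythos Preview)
Your proposal is correct and follows essentially the same route as the paper: invoke \Cref{lem:polynomial-solutions} with $\vect{p} = \vect{a}$ to obtain $\vect{q}$ satisfying the two recursive equations, then verify $\vect{q}[n/c] = f^c(\vect{x})$ by a straightforward induction on $c$. The paper's presentation is slightly terser (it does not spell out that $\vect{a} \in \PE[\vect{x}]^d$) and phrases the inductive step as $c-1 \to c$ rather than $c \to c+1$, but the argument is the same.
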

\begin{proof}
Consider an nnt-loop of the form \eqref{loop}.
By \Cref{lem:polynomial-solutions}, we can compute a
$\vect{q} \subseteq \PE[\vect{x}]^d$ that satisfies
\[
  \textstyle
  \vect{q}[n/0] = \vect{x} \quad \text{and} \quad \vect{q} = (A\, \vect{q} + \vect{a})\;[n/n-1] \quad \text{for all } n > 0.
\]
  We prove $f^c(\vect{x}) = \vect{q}[n/c]$ by induction on $c \in \NN$.
  If $c=0$, we get
  \[
    f^c(\vect{x}) = f^0(\vect{x}) = \vect{x} = \vect{q}[n/0] = \vect{q}[n/c].
  \]
  \[
    \begin{array}{l@{\hspace*{1cm}}llll}
   \mbox{If $c>0$, we get:}&   f^c(\vect{x}) &=& A\, f^{c-1}(\vect{x}) + \vect{a} & \text{by definition of } f\\
                       &&=& A\, \vect{q}[n/c-1] + \vect{a} & \text{by the induction hypothesis}\\
                       &&=& (A\, \vect{q} + \vect{a})\;[n/c-1] & \text{as } \vect{a} \in
      \ZZ^d \text{ does not contain } n\\
                       &&=& \vect{q}[n/c] &
    \end{array}    \vspace*{-.3cm}
    \]
\end{proof}

\noindent
So invoking \Cref{alg:closed-form-vector} on $\vect{x}, A$, and $\vect{a}$ yields the closed form of an nnt-loop \eqref{loop}.

\begin{ex}
  \label{ex:closed}
  We show how to compute the closed form for \Cref{ex:chained}.
  For
  \[
    y \assign 2 \cdot w + 4 \cdot y - 2,
  \]
  we obtain
  \[
    \begin{array}{lll@{\;\,}l}
      q_y&=& \left( 4 \cdot q_y + 2 \cdot q_w - 2 \right) \; [n / n-1]\\
&=&      4^n \cdot y + \sum_{i=1}^{n} 4^{n-i} \cdot (2 \cdot q_w - 2)\; [n/i-1]& (\text{by
        \eqref{eq:closed-non-zero}})\\
        &=& y \cdot 4^n  + \sum_{i=1}^{n} 4^{n-i} \cdot (\charfun{n=0} \cdot 2 \cdot w +
      \charfun{n \neq 0} \cdot 4 - 2)\;[n/i-1] & (\text{see  \Cref{ex:single-closed}})\\
      &=&  q_0   + q_1 + q_2 + q_3 & (\text{see  \Cref{ex:sum}})
    \end{array}
  \]
  where $q_0 = y \cdot 4^n$. For $z \assign x + 1$, we get
  \[
    \begin{array}{lll@{\hspace*{1cm}}l}
      q_z &=& (q_x + 1)\; [n / n-1]\\
      &=& 0^n \cdot z +  \sum_{i=1}^{n} 0^{n-i} \cdot (q_x + 1)\;[n/i-1]& (\text{by
        \eqref{eq:closed-non-zero}})\\
 &=& \charfun{n = 0} \cdot z + \charfun{n \neq 0} \cdot (q_x[n/n-1] + 1)\\
      &=& \charfun{n = 0} \cdot z + \charfun{n \neq 0} \cdot \left((x + 2 \cdot n)\;[n/n-1] + 1\right)
& (\text{see  \Cref{ex:single-closed}})\\
            &=& \charfun{n = 0} \cdot z + \charfun{n \neq 0} \cdot (x - 1) + \charfun{n \neq 0} \cdot 2 \cdot n.
    \end{array}
  \]
  So the closed form of \Cref{ex:chained} for the values of the variables
  after $n$ iterations is:
  \[
    \footnotesize \begin{bmatrix}
      q_w\\
      q_x\\
      q_y\\
      q_z
    \end{bmatrix}
    =
    \begin{bmatrix}
      \charfun{n=0} \cdot w + \charfun{n \neq 0} \cdot 2\\
      x + 2 \cdot n\\
      q_0  + q_1 + q_2 + q_3\\
      \charfun{n = 0} \cdot z + \charfun{n \neq 0} \cdot (x - 1) + \charfun{n \neq 0} \cdot 2 \cdot n
    \end{bmatrix}
  \]
    \end{ex}

\section{Deciding Non-Termination of nnt-Loops}
\label{sec:deciding}

Our
proof uses the notion of \emph{eventual non-termination} \cite{DBLP:conf/cav/Braverman06,DBLP:conf/soda/OuakninePW15}.
Here, the idea is to disregard the condition of the loop during a finite prefix of the
program run.\report{\pagebreak}
\begin{definition}[Eventual Non-Termination]
  \label{def:non-term}
  A vector $\vect{c} \in \ZZ^d$ \emph{witnesses eventual non-termination} of
  \eqref{loop} if  \paper{\pagebreak}
  \[
    \exists n_0 \in \NN.\ \forall n \in \NN_{>n_0}.\ \phi[\vect{x} / f^{n}(\vect{c})].
  \]
  If there is such a witness, then \eqref{loop} is \emph{eventually non-ter\-mi\-nat\-ing}.
\end{definition}
Clearly, \eqref{loop} is non-terminating iff \eqref{loop} is
eventually non-terminating \cite{DBLP:conf/soda/OuakninePW15}\report{ (see
\Cref{app:non-term} for a proof)}.
Now \Cref{lem:closed} gives rise to an alternative characterization of eventual non-ter\-mi\-nat\-ion in terms of the closed form $\vect{q}$ instead of $f^{n}(\vect{c})$.
\begin{corollary}[Expressing Non-Termination with $\PE$]
  \label{thm:term-closed-form}
  If $\vect{q}$ is the closed form of \eqref{loop}, then $\vect{c} \in \ZZ^d$ witnesses eventual non-termination iff
  \begin{equation}
    \label{eq:term-closed-form-1}
    \exists n_0 \in \NN.\ \forall n \in \NN_{>n_0}.\ \phi[\vect{x} / \vect{q}][\vect{x} / \vect{c}].
  \end{equation}
 \end{corollary}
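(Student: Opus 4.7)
The plan is to reduce the claim directly to the definition of eventual non-termination together with \Cref{lem:closed}. By \Cref{def:non-term}, $\vect{c} \in \ZZ^d$ witnesses eventual non-termination iff
\[
  \exists n_0 \in \NN.\ \forall n \in \NN_{>n_0}.\ \phi[\vect{x} / f^{n}(\vect{c})].
\]
So it suffices to show that for every fixed $n \in \NN$, the instantiations $\phi[\vect{x} / f^{n}(\vect{c})]$ and $\phi[\vect{x}/\vect{q}][\vect{x}/\vect{c}]$ agree when the designated variable in the latter is set to the same $n$.

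For this, I would invoke \Cref{lem:closed}, which provides $\vect{q} \in \PE[\vect{x}]^d$ with $\vect{q}[n/c] = f^c(\vect{x})$ for every $c \in \NN$. Instantiating this identity at $c = n$ and then substituting $\vect{x}/\vect{c}$ on both sides yields $\vect{q}[\vect{x}/\vect{c}] = f^n(\vect{x})[\vect{x}/\vect{c}] = f^n(\vect{c})$, where I use that $\vect{c} \in \ZZ^d$ contains neither $\vect{x}$ nor $n$, so the two substitutions commute. Plugging this into $\phi[\vect{x}/f^n(\vect{c})]$ and noting that $\phi$ does not mention $n$ (so the inner and outer substitutions can be composed as $\phi[\vect{x}/\vect{q}][\vect{x}/\vect{c}]$) gives the desired equivalence, after which quantification over $n > n_0$ yields the corollary.

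The only subtlety, and the one step I would write out carefully, is the bookkeeping of substitutions: because $\vect{q}$ is a vector of expressions in both $\vect{x}$ and the designated variable $n$, one must distinguish the outer substitution $[\vect{x}/\vect{q}]$ (which introduces fresh occurrences of $\vect{x}$ coming from $\vect{q}$) from the subsequent $[\vect{x}/\vect{c}]$ (which eliminates exactly those occurrences). Since $\phi$ is $\vect{x}$-linear and contains no $n$, and since $\vect{c}$ is a constant vector, this reordering is routine and no further machinery beyond \Cref{lem:closed} is needed.
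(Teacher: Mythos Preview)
Your proposal is correct and matches the paper's own argument essentially line for line: the paper also reduces the claim to \Cref{def:non-term} and then replaces $f^n(\vect{c})$ by $\vect{q}[\vect{x}/\vect{c}]$ using \Cref{lem:closed}, noting that $\vect{q}$ is equivalent to $f^n(\vect{x})$. Your additional remarks about the commutation of the substitutions are fine and simply make explicit what the paper leaves implicit.
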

\report{\makeproof{thm:term-closed-form}{
  We have:
  \[
    \begin{array}{lll}
      &\vect{c} \text{ witnesses eventual non-termination}\\
      \iff&\exists n_0 \in \NN.\ \forall n \in \NN_{>n_0}.\ \phi[\vect{x} / f^{n}(\vect{c})]\\
      \iff&\exists n_0 \in \NN.\ \forall n \in
      \NN_{>n_0}.\ \phi[\vect{x} / \vect{q}[\vect{x} / \vect{c}]]& (\text{as } \vect{q}
      \text{ is the closed form of } \eqref{loop}\\
&&\text{and thus, $\vect{q} = f^n(\vect{x})$, cf.\ \Cref{lem:closed}})\\
      \iff&\eqref{eq:term-closed-form-1}
    \end{array}
  \]
}}
\begin{proof}
  Immediate, as $\vect{q}$ is equivalent to $f^n(\vect{x})$.
  \report{See
  \Cref{thm:term-closed-form_proof} for details.}
\end{proof}

So to prove that termination of nnt-loops is decidable, we will use \Cref{thm:term-closed-form} to show that the existence of a witness for eventual non-termination is decidable.
To do so, we first eliminate the factors $\charfun{\psi}$ from the closed form $\vect{q}$.
Assume that $\vect{q}$ has at least one factor $\charfun{\psi}$ where $\psi$ is non-empty (otherwise,
all factors $\charfun{\psi}$ are equivalent to $1$) and
let
$c$ be the maximal constant that occurs in such a factor.
Then all addends $\charfun{\psi} \cdot \alpha \cdot n^{a} \cdot b^n$ where $\psi$ contains a positive literal become $0$ and all other addends become $\alpha \cdot n^{a} \cdot b^n$ if $n > c$.
Thus, as we can assume $n_0 > c$ in \eqref{eq:term-closed-form-1} without loss of generality, all factors $\charfun{\psi}$ can be eliminated when checking eventual non-termination.

\begin{corollary}[Removing $\charfun{\psi}$ from $\PE$s]
  \label{cor:normalized}
  Let $\vect{q}$ be the closed form of an nnt-loop \eqref{loop}.
  Let
  $\vect{q}_{norm}$ result from $\vect{q}$ by removing all addends
  $\charfun{\psi} \cdot \alpha \cdot n^{a} \cdot b^n$ where $\psi$ contains a positive
  literal and by replacing all addends $\charfun{\psi} \cdot \alpha \cdot n^{a} \cdot b^n$ where $\psi$ does not contain a positive literal by $\alpha \cdot n^{a} \cdot b^n$.
  Then $\vect{c} \in \ZZ^d$ is a witness for eventual non-termination iff
  \begin{equation}
    \label{eq:term-closed-form-witness}
    \exists n_0 \in \NN.\ \forall n \in \NN_{>n_0}.\ \phi[\vect{x} /
      \vect{q}_{norm}][\vect{x} / \vect{c}].
  \end{equation}
  \setcounter{eq-term-closed-form}{\value{equation}}
 \end{corollary}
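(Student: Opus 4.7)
My plan is to derive the corollary directly from \Cref{thm:term-closed-form} by showing that, past a certain threshold on $n$, the vectors $\vect{q}$ and $\vect{q}_{norm}$ evaluate identically, so any sufficiently delayed ``tail'' of the quantification is insensitive to the replacement.

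The first step is to define $c^\ast \in \NN$ to be the maximum constant that occurs in any subformula $\psi$ appearing in a factor $\charfun{\psi}$ in any component of $\vect{q}$ (and $c^\ast = -1$ if no such constant exists). Since $\vect{q}$ has finitely many addends across its $d$ components, $c^\ast$ is well defined. Next I would make the key observation about each addend $\charfun{\psi}\cdot\alpha\cdot n^{a}\cdot b^n$ in $\vect{q}$ when $n > c^\ast$: if $\psi$ contains a positive literal $n = c$, then necessarily $c \leq c^\ast < n$, so $\psi$ is falsified and the whole addend is $0$; if $\psi$ contains only negative literals $n \neq c_i$, then each such $c_i \leq c^\ast < n$, so $\psi$ is satisfied and the addend equals $\alpha \cdot n^{a} \cdot b^n$. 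In either case the addend agrees with the corresponding addend of $\vect{q}_{norm}$. Summing over all addends and all components, this yields
\[
    \vect{q}[n/k] \;=\; \vect{q}_{norm}[n/k] \qquad \text{for every } k > c^\ast.
\]

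With this equality in hand, the two directions are immediate. Suppose $\vect{c}$ witnesses eventual non-termination. By \Cref{thm:term-closed-form}, there is some $n_0 \in \NN$ with $\phi[\vect{x}/\vect{q}][\vect{x}/\vect{c}]$ valid for all $n > n_0$. Setting $n_0' = \max(n_0, c^\ast)$ and using the equality above, we obtain $\phi[\vect{x}/\vect{q}_{norm}][\vect{x}/\vect{c}]$ valid for all $n > n_0'$, establishing \eqref{eq:term-closed-form-witness}. Conversely, if \eqref{eq:term-closed-form-witness} holds with threshold $n_0$, then choosing $n_0' = \max(n_0, c^\ast)$ again gives $\phi[\vect{x}/\vect{q}][\vect{x}/\vect{c}]$ valid for all $n > n_0'$, so $\vect{c}$ witnesses eventual non-termination by \Cref{thm:term-closed-form}.

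There is no genuine obstacle here; the content of the statement is just that the characteristic functions in the closed form are all ``eventually constant'' in $n$, and eventual non-termination is by definition agnostic to any bounded prefix. The only small subtlety to be careful about is to take the maximum over \emph{all} addends of all $d$ components when choosing $c^\ast$, so that a single threshold works simultaneously for the entire vector $\vect{q}$; once that is done, the argument is a one-line application of \Cref{thm:term-closed-form}.
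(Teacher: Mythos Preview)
Your proposal is correct and follows essentially the same approach as the paper: define the maximal constant $c^\ast$ occurring in any $\charfun{\psi}$, observe that $\vect{q}$ and $\vect{q}_{norm}$ agree for all $n > c^\ast$, and then exploit the freedom in choosing $n_0$ together with \Cref{thm:term-closed-form}. The paper's proof is slightly terser, phrasing the argument as a short chain of equivalences obtained by restricting $n_0 > c^\ast$, but the content is identical.
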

\report{\makeproof{cor:normalized}{
  By \Cref{thm:term-closed-form}, $\vect{c}$ is a witness for eventual non-termination iff
  \[
    \exists n_0 \in \NN.\ \forall n \in \NN_{>n_0}.\ \phi[\vect{x} / \vect{q}][\vect{x} / \vect{c}].
  \]
  Let $c$ be the maximal constant that occurs in a sub-expression $\charfun{\psi}$ in $\vect{q}$.
  Then
  \[
    \begin{array}{llll}
      &\exists n_0 \in \NN.\ \forall n \in \NN_{>n_0}.\ \phi[\vect{x} / \vect{q}][\vect{x} / \vect{c}]\\
      \iff&\exists n_0 \in \NN_{>c}.\ \forall n \in \NN_{>n_0}.\ \phi[\vect{x} / \vect{q}][\vect{x} / \vect{c}]\\
      \iff&\exists n_0 \in \NN_{>c}.\ \forall n \in \NN_{>n_0}.\ \phi[\vect{x} / \vect{q}_{norm}][\vect{x} / \vect{c}]\\
      \iff&\exists n_0 \in \NN.\ \forall n \in \NN_{>n_0}.\ \phi[\vect{x} / \vect{q}_{norm}][\vect{x} / \vect{c}].
    \end{array}
  \]
}
\begin{proof}
  See above for the proof idea and \Cref{cor:normalized_proof} for a detailed proof.
\end{proof}}

By removing the factors $\charfun{\psi}$ from the closed form $\vect{q}$ of an nnt-loop, we obtain \emph{normalized} poly-exponential expressions.

\begin{definition}[Normalized $\PE$s]
  We call $p \in \PE[\vect{x}]$ \emph{normalized} if it is in
  \[
    \textstyle
    \PEN[\vect{x}] = \left\{\sum_{j=1}^\ell \alpha_j \cdot n^{a_j} \cdot b_j^n \relmiddle{|}
    \ell, a_j \in \NN, \; \alpha_j \in \Lin[\vect{x}], \;  b_j \in \NN_{\geq 1}\right\}.
  \]
  W.l.o.g., we always assume $(b_i,a_i) \neq (b_j,a_j)$ for all
  $i,j \in \{1,\ldots,\ell\}$ with $i \neq j$.
  We define $\PEN = \PEN[\emptyset]$, i.e., we have $p \in \PEN$ if $\alpha_j \in \QQ$ for all $1 \leq j \leq \ell$.
\end{definition}

\begin{ex}
  \label{ex:normalized}
  We continue \Cref{ex:closed}.
  By omitting the factors $\charfun{\psi}$,
  \[
    \begin{array}{lll@{\quad}lll}
      q_w &=& \charfun{n=0} \cdot w + \charfun{n \neq 0} \cdot 2 &\text{becomes}& 2,\\
      q_z &=& \charfun{n = 0} \cdot z + \charfun{n \neq 0} \cdot (x - 1) + \charfun{n \neq
        0} \cdot 2 \cdot n &\text{becomes}&  x - 1 + 2 \cdot n,
    \end{array}
  \]
  and
  $q_x =  x + 2 \cdot n,
  q_0 = y \cdot 4^n$, and
  $q_3 = \tfrac{2}{3} - \frac{2}{3} \cdot 4^{n}$
  remain unchanged. Moreover,
  \[
    \begin{array}{lllllll}
      q_1 &=& \charfun{n \neq 0} \cdot \frac{1}{2} \cdot w
      \cdot 4^{n}  & \text{becomes}& \frac{1}{2} \cdot w \cdot 4^{n} &\text{ and}\\
      q_2 &=& \charfun{n > 1} \cdot \left(- \frac{4}{3}\right) +
\charfun{n > 1} \cdot \frac{1}{3}\cdot 4^{n}
&\text{becomes}&\left(- \frac{4}{3}\right) + \frac{1}{3}\cdot 4^{n}.
    \end{array}
  \]
  Thus, $q_y = q_0 + q_1 + q_2 + q_3$ becomes
  \[
    \textstyle
    y \cdot 4^n + \frac{1}{2} \cdot w \cdot 4^{n} - \frac{4}{3} + \frac{1}{3}\cdot
    4^{n} + \tfrac{2}{3}- \frac{2}{3} \cdot 4^{n}
    = 4^n \cdot \left(y - \frac{1}{3} + \frac{1}{2} \cdot w\right) - \frac{2}{3}.
  \]
  Let $\sigma = \left[w/2,\, x/x+ 2 \cdot n, \, y/4^n \cdot \left(y - \frac{1}{3} + \frac{1}{2}
      \cdot w\right) - \frac{2}{3}, \,  z/x-1 + 2 \cdot n\right]$.
   Then we get that \Cref{ex} is non-terminating iff there are $w,x,y,z \in \ZZ, n_0 \in
      \NN$ such that
      \[
        \begin{array}{l}
          (y + z)\;\sigma > 0 \land (- w - 2 \cdot y + x)\; \sigma > 0 \hfill \iff\\
          4^n \cdot \left(y - \frac{1}{3} + \frac{1}{2} \cdot w\right) - \frac{2}{3} + x
          - 1 + 2 \cdot n  > 0 \hfill \land \hfill \\
          \hspace{2em} - 2 - 2 \cdot \left(4^n \cdot \left(y - \frac{1}{3} + \frac{1}{2} \cdot w\right) -
          \frac{2}{3}\right) + x + 2 {\cdot} n  > 0 \hfill \iff\\
          p^{\phi}_1 > 0 \land p^{\phi}_2 > 0\\
        \end{array}
      \]
      holds for all $n > n_0$ where
      \[
        \begin{array}{llll}
          p^{\phi}_1 &=& 4^n \cdot \left(y - \frac{1}{3} + \frac{1}{2} \cdot w\right) + 2 \cdot n + x - \frac{5}{3} & \text{and}\\
          p^{\phi}_2 &=& 4^n \cdot \left(\frac{2}{3} - 2 \cdot y - w\right) + 2 \cdot n + x - \frac{2}{3}.
        \end{array}
      \]
\end{ex}

Recall that the loop condition
$\phi$ is a conjunction of inequalities of the form $\alpha > 0$ where $\alpha \in \Lin[\vect{x}]$.
Thus, $\phi[\vect{x} / \vect{q}_{norm}]$ is a conjunction of inequalities $p > 0$ where $p
\in \PEN[\vect{x}]$
and we need to decide if there is an instantiation of these
in\-equalities that is valid ``for large enough $n$''.
To do so, we order the coefficients $\alpha_j$ of the addends $\alpha_j \cdot n^{a_j}
\cdot b_j^n$ of normalized poly-exponential expressions according to the addend's
asymptotic growth when increasing $n$. \Cref{lem:domination} shows that
$\alpha_2 \cdot n^{a_2} \cdot b_2^n$
grows faster than  $\alpha_1 \cdot n^{a_1} \cdot b_1^n$ iff $b_2 > b_1$ or both $b_2 = b_1$
and $a_2 > a_1$.

\begin{lemma}[Asymptotic Growth]
  \label{lem:domination}
  Let $b_1,b_2 \in \NN_{\geq 1}$ and $a_1, a_2 \in \NN$.
  If $(b_2, a_2) >_{lex} (b_1, a_1)$, then $\OO(n^{a_1} \cdot b_1^n) \subsetneq \OO(n^{a_2} \cdot b_2^n)$. Here, ${>_{lex}}$ is the lexicographic order, i.e., $(b_2,a_2) >_{lex}
  (b_1,a_1)$ iff $b_2 > b_1$ or $b_2 = b_1 \land a_2 > a_1$.
\end{lemma}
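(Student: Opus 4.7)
The plan is to argue via the limit of the ratio of the two functions, splitting into the two cases implicit in the lexicographic order.

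First, I would reduce the inclusion to the single statement $n^{a_1} \cdot b_1^n \in \OO(n^{a_2} \cdot b_2^n)$: this follows because $\OO(\cdot)$ is closed under constant multiples and subsumption, so once a generator of the smaller class lies in the larger class, so do all its asymptotic dominatees. Concretely, if $f \in \OO(n^{a_1} \cdot b_1^n)$ and $n^{a_1} \cdot b_1^n \in \OO(n^{a_2} \cdot b_2^n)$, transitivity of $\OO$ yields $f \in \OO(n^{a_2} \cdot b_2^n)$. For strictness I would separately show $n^{a_2} \cdot b_2^n \notin \OO(n^{a_1} \cdot b_1^n)$.

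Both assertions reduce to computing $\lim_{n\to\infty} \frac{n^{a_1} \cdot b_1^n}{n^{a_2} \cdot b_2^n}$. I would split into the two disjuncts of $(b_2,a_2) >_{lex} (b_1,a_1)$:
\begin{itemize}
\item If $b_2 > b_1$ (so $b_1/b_2 < 1$), then $\frac{n^{a_1} \cdot b_1^n}{n^{a_2} \cdot b_2^n} = n^{a_1-a_2} \cdot (b_1/b_2)^n$. The exponential factor tends to $0$, and a standard argument (e.g.\ ratio test, or noting that $n^{|a_1-a_2|} \cdot (b_1/b_2)^n \to 0$) dominates any polynomial factor $n^{a_1-a_2}$, so the limit is $0$.
\item If $b_2 = b_1$ and $a_2 > a_1$, then the ratio simplifies to $n^{a_1-a_2}$, which tends to $0$ since $a_1 - a_2 < 0$.
\end{itemize}
In either case the limit is $0$, which immediately yields $n^{a_1} \cdot b_1^n \in \OO(n^{a_2} \cdot b_2^n)$ (in fact $o$ of it), giving the inclusion; and the reciprocal ratio tends to $\infty$, so $n^{a_2} \cdot b_2^n \notin \OO(n^{a_1} \cdot b_1^n)$, giving strictness.

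This is essentially a routine asymptotic computation, so I do not expect any real obstacle; the only point requiring a moment of care is the first case, where the polynomial factor $n^{a_1-a_2}$ could have positive exponent when $a_1 > a_2$, but this is harmlessly absorbed by the exponentially decaying factor $(b_1/b_2)^n$. The main work is just cleanly handling the two cases of the lexicographic comparison.
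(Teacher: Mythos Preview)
Your proposal is correct and follows essentially the same route as the paper: both split into the two cases $b_2 > b_1$ and $b_2 = b_1 \land a_2 > a_1$, and both rely on the standard fact that an exponential with base greater than~$1$ dominates any polynomial. The only cosmetic difference is that you argue via $\lim_{n\to\infty} \frac{n^{a_1} b_1^n}{n^{a_2} b_2^n} = 0$, whereas the paper manipulates the $\OO$-classes directly (e.g., writing $n^{a_1} \in \OO((b_2/b_1)^n)$ and reducing strictness to $(b_2/b_1)^n \notin \OO(n^{a_1})$); these are interchangeable formulations of the same asymptotic computation.
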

\report{\makeproof{lem:domination}{
  Recall that $f,g: \NN \to \QQ$, $f(n) \in \OO(g(n))$ means
  \[
    \exists m \in \NN_{>0}, n_0 \in \NN.\ \forall n \in \NN_{> n_0}.\ |f(n)| \leq m \cdot |g(n)|.
  \]
  First consider the case $b_2 > b_1$.
  We have $b_2^n = b_1^n \cdot \left(\tfrac{b_2}{b_1}\right)^n$ where $\frac{b_2}{b_1} > 1$.
  As we clearly have $n^{a_1} \in \OO\left(\left(\tfrac{b_2}{b_1}\right)^n\right)$, we
  obtain
  $n^{a_1} \cdot b_1^n \in \OO\left(\left(\tfrac{b_2}{b_1}\right)^n \cdot b_1^n\right) =
  \OO(b_2^n) \subseteq \OO(n^{a_2} \cdot b_2^n)$, i.e., $\OO(n^{a_1} \cdot  b_1^n) \subseteq \OO(n^{a_2} \cdot b_2^n)$.

  To prove that the inclusion is strict, we show $b_2^n \notin \OO(n^{a_1} \cdot b_1^n)$.
  We have
  \[
    \begin{array}{lll}
      &b_2^n \notin \OO(n^{a_1} \cdot b_1^n)\\
      \iff &\forall m \in \NN_{>0}, n_0 \in \NN.\ \exists n \in \NN_{> n_0}.\ |b_2^n| > m \cdot |n^{a_1} \cdot b_1^n|\\
      \iff &\forall m \in \NN_{>0}, n_0 \in \NN.\ \exists n \in \NN_{> n_0}.\ b_2^n > m \cdot n^{a_1} \cdot b_1^n\\
      \iff &\forall m \in \NN_{>0}, n_0 \in \NN.\ \exists n \in \NN_{> n_0}.\ \left(\tfrac{b_2}{b_1}\right)^n > m \cdot n^{a_1} \\
      \iff &\left(\tfrac{b_2}{b_1}\right)^n \notin \OO(n^{a_1})
    \end{array}
  \]
  which holds as $\frac{b_2}{b_1} > 1$.

  Now consider the case $b_2 = b_1$ and $a_2 > a_1$.
  Then $\OO(n^{a_1} \cdot b_1^n) \subseteq \OO(n^{a_2} \cdot b_2^n)$ trivially holds.
  To prove that the inclusion is strict, we show $n^{a_2} \cdot b_2^n \notin \OO(n^{a_1} \cdot b_1^n)$.
  We have
  \[
    \begin{array}{llll}
      &n^{a_2} \cdot b_2^n \notin \OO(n^{a_1} \cdot b_1^n)\\
      \iff &\forall m \in \NN_{>0}, n_0 \in \NN.\ \exists n \in \NN_{> n_0}.\ |n^{a_2} \cdot b_2^n| > m \cdot |n^{a_1} \cdot b_1^n|\\
      \iff &\forall m \in \NN_{>0}, n_0 \in \NN.\ \exists n \in \NN_{> n_0}.\ n^{a_2} \cdot b_2^n > m \cdot n^{a_1} \cdot b_1^n\\
      \iff &\forall m \in \NN_{>0}, n_0 \in \NN.\ \exists n \in \NN_{> n_0}.\ n^{a_2} > m \cdot n^{a_1} & (\text{as } b_1 = b_2)\\
      \iff &n^{a_2} \notin \OO(n^{a_1})
    \end{array}
  \]
  which holds as $a_2 > a_1$.
}}
\begin{proof}
  By considering the cases $b_2 > b_1$ and $b_2 = b_1$ separately, the claim can
  easily be deduced from the definition of $\OO$.
  \report{See
  \Cref{lem:domination_proof} for details.}
\end{proof}

\begin{definition}[Ordering Coefficients]
  \label{def:marked}
  \emph{Marked coefficients} are of the form $\alpha^{(b,a)}$ where $\alpha \in \Lin[\vect{x}], b \in \NN_{\geq 1}$, and $a \in \NN$.
  We define $\base(\alpha^{(b,a)}) = \alpha$ and $\alpha_2^{(b_2,a_2)} \succ \alpha_1^{(b_1,a_1)}$ if $(b_2,a_2) >_{lex}
  (b_1,a_1)$.
  Let
  \[
    \textstyle
    p = \sum_{j=1}^\ell \alpha_j \cdot n^{a_j} \cdot b_j^n \in \PEN[\vect{x}],
  \]
  where $\alpha_j \neq 0$ for all $1 \leq j \leq \ell$. The marked coefficients of $p$ are
  \[
    \coeffs(p) = \begin{cases}
      \left\{0^{(1,0)}\right\}, & \text{if } \ell = 0\\
      \left\{\alpha_j^{(b_j,a_j)} \relmiddle{|} 0 \leq j \leq \ell\right\}, & \text{otherwise.}
    \end{cases}
  \]
\end{definition}
\begin{ex}
  \label{ex:marked}
  In \Cref{ex:normalized} we saw that the loop from \Cref{ex} is non-terminating iff there are
  $w,x,y,z \in \ZZ, n_0 \in \NN$ such that
  $p^{\phi}_1 > 0 \land p^{\phi}_2 > 0$
  for all $n > n_0$.
  We get:
  \begin{align*}
    \coeffs\left(p^{\phi}_1\right) &= \left\{\left(y - \tfrac{1}{3} + \tfrac{1}{2} \cdot w\right)^{(4,0)}, 2^{(1,1)}, \left(x-\tfrac{5}{3}\right)^{(1, 0)}\right\}\\
    \coeffs\left(p^{\phi}_2\right) &= \left\{\left(\tfrac{2}{3} - 2 \cdot y - w\right)^{(4,0)}, 2^{(1,1)}, \left(x-\tfrac{2}{3}\right)^{(1,0)}\right\}
  \end{align*}
\end{ex}

Now it is easy to see that the asymptotic growth of a normalized poly-exponential expression is solely determined by its $\succ$-maximal addend.
\begin{corollary}[Maximal Addend Determines Asymptotic Growth]
  \label{cor:theta}
  Let $p \in \PEN$ and let
  $\max_{\succ}(\coeffs(p)) = c^{(b,a)}$.
  Then $\OO(p) = \OO(c \cdot n^a \cdot b^n)$.
\end{corollary}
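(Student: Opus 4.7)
The plan is to split $p$ into its $\succ$-maximal addend $c \cdot n^a \cdot b^n$ and the remaining addends, and then show both directions of the set equality separately, using \Cref{lem:domination} as the key tool.

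First I would dispose of the degenerate case $\ell = 0$: then $p \equiv 0$ and $\coeffs(p) = \{0^{(1,0)}\}$, so $c = 0$ and both sides are $\OO(0)$. For the main case $\ell \geq 1$, the definition of $\coeffs(p)$ yields $c \neq 0$, and the assumption that the pairs $(b_j, a_j)$ are pairwise distinct lets me write
\[
  p \;=\; c \cdot n^a \cdot b^n \;+\; \sum_{\substack{1 \leq j \leq \ell \\ (b_j,a_j) \neq (b,a)}} \alpha_j \cdot n^{a_j} \cdot b_j^n,
\]
where every surviving pair satisfies $(b_j, a_j) <_{lex} (b, a)$ because $c^{(b,a)}$ is the $\succ$-maximum.

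For the inclusion $\OO(p) \subseteq \OO(c \cdot n^a \cdot b^n)$, I would apply \Cref{lem:domination} to each remaining addend to conclude $\alpha_j \cdot n^{a_j} \cdot b_j^n \in \OO(n^a \cdot b^n)$, so the whole remainder lies in $\OO(n^a \cdot b^n)$; adding the leading term and using $c \neq 0$ gives $p \in \OO(n^a \cdot b^n) = \OO(c \cdot n^a \cdot b^n)$.

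The reverse inclusion $\OO(c \cdot n^a \cdot b^n) \subseteq \OO(p)$ is the delicate point, because a priori the remainder could cancel the leading term. I would overcome this by invoking the \emph{strict} part of \Cref{lem:domination}: for each surviving $j$ we have $n^{a_j} \cdot b_j^n / (n^a \cdot b^n) \to 0$ as $n \to \infty$. Summing finitely many such terms shows that the remainder divided by $n^a \cdot b^n$ tends to $0$, hence $p(n) / (c \cdot n^a \cdot b^n) \to 1$. Since $c \neq 0$, this implies $|p(n)| \geq \tfrac{|c|}{2} \cdot n^a \cdot b^n$ for all sufficiently large $n$, which gives $c \cdot n^a \cdot b^n \in \OO(p)$ and closes the proof. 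The main obstacle is precisely this no-cancellation argument; everything else is a direct application of the previous lemma.
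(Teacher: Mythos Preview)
Your proof is correct and follows essentially the same approach as the paper: handle the degenerate case $p=0$ separately, then split $p$ into its $\succ$-maximal addend and the remainder and appeal to \Cref{lem:domination}. The paper's version is terser and simply asserts $\OO(p)=\OO(n^a\cdot b^n)$ once each remaining addend lies in a strictly smaller $\OO$-class, whereas you spell out the reverse inclusion via the ratio argument; one minor remark is that the \emph{strict} $\OO$-inclusion of \Cref{lem:domination} does not by itself give $n^{a_j}b_j^n/(n^a b^n)\to 0$ in general, but for these particular monomials the ratio is easily seen to tend to $0$ directly, so the step is fine.
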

\report{\makeproof{cor:theta}{
  If $p = 0$, then $c = 0$ by \Cref{def:marked} and hence $\OO(p) = \OO(c \cdot n^a \cdot b^n) = \OO(0)$.
  Otherwise, $p$ has the form
  \[
  \begin{array}{l} c \cdot n^a \cdot b^n + \sum_{j=1}^\ell c_j \cdot n^{a_j} \cdot
    b_i^n \end{array}
  \]
for $c \neq 0$ and $\ell \geq 0$.
  We have $c_j^{(b_j,a_j)} \in \coeffs(p)$ and hence $(b,a) >_{lex} (b_j,a_j)$ for all $1 \leq j \leq \ell$.
  Thus, \Cref{lem:domination} implies $\OO(n^{a_j} \cdot b_j^n) \subsetneq \OO(n^a \cdot b^n)$ and hence we get
  \[
   \begin{array}{l}  \OO(p) = \OO\left(c \cdot n^a \cdot b^n + \sum_{j=1}^\ell c_j \cdot
     n^{a_j} \cdot b_j^n\right) = \OO(n^a \cdot b^n)= \OO(c \cdot n^a \cdot b^n). \end{array}
  \]
}}
\begin{proof}
  Clear, as $c \cdot n^a \cdot b^n$ is the asymptotically dominating addend of $p$.
  \report{See \Cref{cor:theta_proof} for a detailed proof.}
\end{proof}

Note that \Cref{cor:theta} would be incorrect for the case $c = 0$ if we replaced $\OO(p) = \OO(c \cdot n^a \cdot b^n)$ with $\OO(p) = \OO(n^a \cdot b^n)$ as $\OO(0) \neq \OO(1)$.
Building upon \Cref{cor:theta}, we now show that, for large $n$, the sign of a
normalized poly-exponential expression is solely determined by its $\succ$-maximal
coefficient.
Here, we define $\sign(c) = -1$ if $c \in \QQ_{<0} \cup \{-\infty\}$, $\sign(0) = 0$, and
$\sign(c) = 1$ if $c \in \QQ_{>0} \cup \{\infty\}$.

\begin{lemma}[Sign of $\PEN$s]
  \label{lem:asym}
  Let $p \in \PEN$.
  Then $\lim_{n \mapsto \infty} p \in \QQ$ iff
    $p \in \QQ$ and otherwise, $\lim_{n \mapsto
  \infty} p \in \{ \infty, -\infty \}$.
Moreover, we have
\[
  \textstyle
  \sign\left(\lim_{n \mapsto \infty} p\right) = \sign(\base(\max_{\succ}(\coeffs(p)))).
\]
\end{lemma}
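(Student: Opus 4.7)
My plan is to case-split on the structure of the $\succ$-maximal element $c^{(b,a)} = \max_\succ(\coeffs(p))$. Writing $p = \sum_{j=1}^\ell \alpha_j \cdot n^{a_j} \cdot b_j^n$ in its normalized form (pairs $(b_j, a_j)$ distinct and each $\alpha_j \neq 0$), I distinguish three cases: (i) $\ell = 0$; (ii) $\ell \geq 1$ and $(b,a) = (1,0)$; and (iii) $\ell \geq 1$ and $(b,a) >_{lex} (1,0)$. All three assertions of the lemma will be established simultaneously in each case.

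Cases (i) and (ii) are essentially bookkeeping. In case (i), $p = 0$ and $\coeffs(p) = \{0^{(1,0)}\}$ by \Cref{def:marked}, so $\lim p = 0 \in \QQ$ and $\sign(\lim p) = 0 = \sign(0)$, as required. In case (ii), since every $(b_j,a_j)$ satisfies $b_j \geq 1$ and $a_j \geq 0$ and therefore $(b_j,a_j) \geq_{lex} (1,0)$, the $\succ$-maximality of $(1,0)$ together with the distinctness of the $(b_j,a_j)$ forces $\ell = 1$; hence $p = c \in \QQ$ and $\lim p = c$ with $\sign(\lim p) = \sign(c) = \sign(\base(c^{(1,0)}))$.

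The substantive case is (iii). Let $j^\ast$ be the unique index with $(b_{j^\ast}, a_{j^\ast}) = (b, a)$. I would divide through by the leading factor and write
\[
  \frac{p(n)}{n^a \cdot b^n} \;=\; c \;+\; \sum_{j \neq j^\ast} \alpha_j \cdot n^{a_j - a} \cdot \left(\tfrac{b_j}{b}\right)^{\!n},
\]
and argue termwise that each remaining summand tends to $0$: for $j \neq j^\ast$ we have $(b_j, a_j) <_{lex} (b, a)$, so either $b_j < b$ (in which case the geometric factor $(b_j/b)^n$ decays exponentially and dominates any polynomial factor $n^{a_j - a}$) or $b_j = b$ with $a_j < a$ (in which case $n^{a_j - a} \to 0$ and the geometric factor is identically $1$). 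Hence $p(n)/(n^a \cdot b^n) \to c \neq 0$, and since $(b,a) >_{lex} (1,0)$ forces $n^a \cdot b^n \to \infty$, we conclude $\lim p = \sign(c) \cdot \infty$. This simultaneously yields the second assertion (the limit lies in $\{-\infty, \infty\}$ and in particular outside $\QQ$, while $p$ is visibly not a rational constant) and the claimed sign identity. Assembling the three cases gives the biconditional in the first assertion: $\lim p \in \QQ$ exactly in cases (i) and (ii), which is exactly when $p$ itself is a rational constant.

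The only step requiring any real work is the termwise limit computation in case (iii); it is a sharper, sign-tracking refinement of the asymptotic bound already recorded in \Cref{lem:domination} and \Cref{cor:theta}, so I do not expect any hidden subtlety beyond carefully separating the two sub-cases $b_j < b$ and $b_j = b$.
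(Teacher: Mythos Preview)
Your argument is correct, and your trichotomy (i)--(iii) coincides with the paper's opening case split ($p=0$, $p\in\QQ\setminus\{0\}$, $p\notin\QQ$). The genuine difference is in case~(iii). The paper proceeds by induction on the number $\ell$ of summands: it peels off one term, applies the induction hypothesis to the remaining sum, and then invokes \Cref{cor:theta} and \Cref{lem:domination} to compare the big-$\OO$ classes of the two pieces and conclude which one determines the limit. You instead factor out the dominant monomial $n^a\,b^n$ once and for all and argue directly that every remaining quotient $\alpha_j\,n^{a_j-a}(b_j/b)^n$ tends to $0$, reducing the whole thing to the elementary fact that exponential decay beats polynomial growth. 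Your route avoids the induction and does not need to cite \Cref{lem:domination} or \Cref{cor:theta}; the paper's route reuses the asymptotic machinery already set up. Both are standard; yours is slightly cleaner as a standalone argument, while the paper's is a natural dividend of the lemmas it has just proved.
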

\report{\makeproof{lem:asym}{
  If $p = 0$, \pagebreak then $\lim_{n \mapsto \infty} p = 0$ and
  $\base(\max_{\succ}(\coeffs(p))) = 0$ and hence the claim holds.
  If $p \in \QQ\setminus\{0\}$,  then $p = c \cdot n^{0} \cdot 1^n$ for some $c \in
  \QQ\setminus\{0\}$
  and hence $\lim_{n \mapsto \infty} p = c$ and
  \[
  \begin{array}{l}  \sign\left(\lim_{n \mapsto \infty} p\right) = \sign(c) =
    \sign(\base(\max_\succ(\coeffs(p)))). \end{array}
  \]

 Now consider the case $p \notin \QQ$. Note that we have $c \neq 0$ for all $c^{(b,a)} \in
 \coeffs(p)$ and thus, $\sign(\base(\max_{\succ}(\coeffs(p)))) \in \{1,-1\}$. Hence, it
 suffices to prove
  \[
    \begin{array}{l} \lim_{n \mapsto \infty} p = \sign(\base(\max_{\succ}(\coeffs(p)))) \cdot \infty. \end{array}
  \]
  Let
  \[
   \begin{array}{l}  p = \sum_{j=1}^\ell c_j \cdot n^{a_j} \cdot b_j^n \end{array}
    \]
    with $\ell \geq 1$ and $c_j \neq 0$ for all $1 \leq j \leq \ell$, since $p \neq 0$.
  We use induction on $\ell$.
  In the induction base ($\ell = 1$), we have
  $\max_{\succ}(\coeffs(p)) = c_1^{(b_1,a_1)}$ and the claim follows as we have
  \[
    \begin{array}{llllll}
      \lim_{n \mapsto \infty} c_1 \cdot n^{a_1} \cdot b_1^n &=& \phantom{-}\infty &\iff& c_1 > 0 & \text{and}\\
      \lim_{n \mapsto \infty} c_1 \cdot n^{a_1} \cdot b_1^n &=& -\infty &\iff& c_1 < 0.
    \end{array}
  \]

  In the induction step, we have $\ell > 1$.
  Let
  \[
   \begin{array}{l}   \max_{\succ}\left(\coeffs\left(\sum_{j = 2}^\ell c_j \cdot n^{a_j}
     \cdot b_j^n\right)\right) = c^{(b,a)}. \end{array}
  \]
  The induction hypothesis implies
  \begin{equation}
    \label{eq:ih}
    \begin{array}{llll}
      \lim_{n \mapsto \infty} c_1 \cdot n^{a_1} \cdot b_1^n &=&
      \left\{ \begin{array}{ll}
        \sign(c_1) \cdot \infty, &\text{if $a_1 \neq 0$ or $b_1 \neq 1$}\\
        c_1, &\text{otherwise}
        \end{array} \right.\\[0.4cm]
      \lim_{n \mapsto \infty} \sum_{j = 2}^\ell c_j \cdot n^{a_j} \cdot b_j^n &=&
  \left\{ \begin{array}{ll}
        \sign(c) \cdot \infty, &\text{if $a \neq 0$ or $b \neq 1$}\\
        c, &\text{otherwise}
  \end{array} \right.
      \end{array}
  \end{equation}
  Moreover, we have:
  \begin{equation}
    \label{eq:theta}
    \begin{array}{llllll}
     \OO\left(c_1 \cdot n^{a_1} \cdot b_1^n\right) &&
     &=& \OO(n^{a_1} \cdot b_1^n)& \text{and}\\
     \OO\left(\sum_{j = 2}^\ell c_j \cdot n^{a_j} \cdot b_j^n\right) &=& \OO(c \cdot n^{a} \cdot b^n) &=& \OO(n^{a} \cdot b^n)
    \end{array}
  \end{equation}
  by \Cref{cor:theta}, as $c_1 \neq 0$ and $c \neq 0$.
  If $c_1^{(b_1,a_1)} \succ c^{(b,a)}$, then $a_1 \neq 0$ or $b_1 \neq 1$. Moreover, then
  we have
  \begin{align}
    \OO(n^{a} \cdot b^n) &\subsetneq \OO(n^{a_1} \cdot b_1^n) && \text{(by  \Cref{lem:domination}) $\;$ and}\label{eq:domination}\\
      \max\nolimits_{\succ}(\coeffs(p)) &= c_1^{(b_1,a_1)}&&(\text{by  \Cref{def:marked}}) \label{eq:max}
  \end{align}
  and hence we obtain
  \[
    \begin{array}{lll}
      &\lim_{n \mapsto \infty} p\\
      =& \lim_{n \mapsto \infty} \left(c_1 \cdot n^{a_1} \cdot b_1^n + \sum_{j = 2}^\ell c_j \cdot n^{a_j} \cdot b_j^n\right)\\
      =& \lim_{n \mapsto \infty} c_1 \cdot n^{a_1} \cdot b_1^n & (\text{by } \eqref{eq:theta} \text{ and } \eqref{eq:domination})\\
      =& \sign(c_1) \cdot \infty & (\text{by } \eqref{eq:ih} \text{ as } a_1 \neq 0 \text{ or } b_1 \neq 1)\\
      =& \sign(\base(\max_{\succ}(\coeffs(p)))) \cdot \infty &(\text{by } \eqref{eq:max}),
    \end{array}
  \]
  as desired.
  The case $c^{(b,a)} \succ c_1^{(b_1,a_1)}$ is analogous.
}}
\begin{proof}
  If $p \notin \QQ$, then the limit of each addend of $p$ is in $\{-\infty,
  \infty\}$ by definition of $\PEN$. As the asymptotically dominating addend
  determines $\lim_{n \mapsto \infty} p$ and $\base(\max_{\succ}(\coeffs(p)))$
  determines the sign of the asymptotically dominating addend, the claim
  follows.
  \report{See \Cref{lem:asym_proof} for a detailed proof.}
\end{proof}

\Cref{thm:limits} shows the connection between the limit of a normalized poly-expo\-nen\-tial expression $p$
and the question whether  $p$ is positive for
large enough $n$. The latter corresponds to the existence of a witness for eventual
non-termination by
\Cref{cor:normalized} as $\phi[\vect{x} / \vect{q}_{norm}]$ is a conjunction of inequalities $p > 0$ where $p
\in \PEN[\vect{x}]$.

\begin{lemma}[Limits and Positivity of $\PEN$s]
  \label{thm:limits}
  Let $p \in \PEN$.
  Then
  \[
    \textstyle
    \exists  n_0 \in \NN.\ \forall n \in
    \NN_{>n_0}.\ p > 0 \iff
    \lim_{n \mapsto \infty}
    p > 0.
  \]
\end{lemma}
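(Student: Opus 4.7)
The plan is to reduce this statement almost immediately to the preceding \Cref{lem:asym}, which already classifies the limit behaviour of any $p \in \PEN$: either $p \in \QQ$ (and the limit equals $p$ itself), or $\lim_{n\mapsto\infty} p \in \{-\infty,+\infty\}$, with the sign determined by $\base(\max_{\succ}(\coeffs(p)))$. The two directions of the biconditional then fall out by a clean case split on these two cases.

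First I would handle the trivial case: if $p \in \QQ$, then $p$ is a rational constant (independent of $n$). Here both sides of the equivalence reduce to the single assertion $p > 0$, since $\lim_{n\mapsto\infty} p = p$ and since $p > 0$ holds for all $n$ iff it holds for any one $n$. So the equivalence is immediate in this case.

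The substantive case is $p \notin \QQ$. By \Cref{lem:asym}, $\lim_{n\mapsto\infty} p \in \{-\infty, +\infty\}$. For the direction $(\Leftarrow)$, assume $\lim_{n\mapsto\infty} p > 0$; then necessarily $\lim_{n\mapsto\infty} p = +\infty$, so by definition of the limit there is some $n_0$ with $p > 0$ for all $n > n_0$. For the direction $(\Rightarrow)$, assume there is some $n_0$ with $p > 0$ for all $n > n_0$; then the limit cannot equal $-\infty$ (since $-\infty$ would force $p$ to be eventually negative), so it must equal $+\infty$, which is indeed $> 0$.

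I do not expect any real obstacle: all the technical content — showing that the asymptotically dominating addend controls both the limit and the eventual sign of $p$ — has already been packaged into \Cref{cor:theta} and \Cref{lem:asym}. The only care needed is to observe that the ``$p \in \QQ$ versus $p \notin \QQ$'' dichotomy supplied by \Cref{lem:asym} exactly matches the dichotomy ``limit is finite versus limit is infinite,'' so no additional argument about intermediate cases (e.g.\ an oscillating or bounded non-constant $p$) is required.
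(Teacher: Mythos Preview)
Your proposal is correct and matches the paper's own proof in substance: both reduce immediately to \Cref{lem:asym} and then perform a case analysis, the paper splitting on the four possible values of $\lim_{n\mapsto\infty} p$ (namely $\infty$, $-\infty$, $\QQ_{>0}$, $\QQ_{\leq 0}$) while you equivalently split on $p \in \QQ$ versus $p \notin \QQ$. The arguments in each case are the same.
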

\report{\makeproof{thm:limits}{
  First note that $\lim_{n \mapsto \infty} p$ exists due to \Cref{lem:asym}.
  It remains to show that
  \[
    \exists n_0 \in \NN.\ \forall n \in \NN_{>n_0}.\ p > 0 \iff \lim_{n \mapsto \infty} p > 0.
  \]

  If $\lim_{n \mapsto \infty} p = \infty$, then
  \[
    \begin{array}{lll}
      &\forall m \in \QQ.\ \exists n_0 \in \NN.\ \forall n \in \NN_{> n_0}.\ p > m\\
      \implies & \exists n_0 \in \NN.\ \forall n \in \NN_{>n_0}.\ p > 0.
    \end{array}
  \]

  If $\lim_{n \mapsto \infty} p = -\infty$, then
  \[
    \begin{array}{lll}
      &\forall m \in \QQ.\ \exists n_0 \in \NN.\ \forall n \in \NN_{> n_0}.\ p < m\\
      \implies &\exists n_0 \in \NN.\ \forall n \in \NN_{> n_0}.\ p < 0\\
      \implies &\forall n_0 \in \NN.\ \exists n \in \NN_{>n_0}.\ p < 0\\
      \implies &\forall n_0 \in \NN.\ \exists n \in \NN_{>n_0}.\ p \leq 0\\
        \iff &\neg(\exists n_0 \in \NN.\ \forall n \in \NN_{>n_0}.\ p > 0)
    \end{array}
  \]

  If $\lim_{n \mapsto \infty} p \in \QQ_{> 0}$, then $p \in \QQ_{> 0}$ by \Cref{lem:asym}
  and thus
  \[
    \begin{array}{l}
         \exists n_0 \in \NN.\ \forall n \in \NN_{>n_0}.\ p > 0
    \end{array}
  \]

  If $\lim_{n \mapsto \infty} p \in \QQ_{\leq 0}$, then $p \in \QQ_{\leq 0}$ by \Cref{lem:asym}
  and thus
  \[
    \begin{array}{ll}
      & \forall n_0 \in \NN.\ \exists n \in \NN_{>n_0}.\ p \leq 0\\
      \iff &\neg(\exists n_0 \in \NN.\ \forall n \in \NN_{>n_0}.\ p > 0)
    \end{array}
  \]
}}
\begin{proof}
  By case analysis over $\lim_{n \mapsto \infty} p$.
  \report{See
  \Cref{thm:limits_proof} for details.}
\end{proof}

Now we show that \Cref{cor:normalized} allows us to decide eventual non-termination by
examining the coefficients of normalized poly-exponential expressions.
As these coefficients are in $\Lin[\vect{x}]$, the required reasoning is
 decidable. \report{\pagebreak}
\begin{lemma}[Deciding Eventual Positiveness of $\PEN$s]
  \label{lem:decidable}
  Validity of
  \begin{equation}
    \label{eq:decidable}
  \begin{array}{l}  \exists \vect{c} \in \ZZ^{d}, n_0 \in \NN.\ \forall n \in \NN_{>n_0}.\
    \bigwedge_{i=1}^k p_i[\vect{x}/\vect{c}] > 0 \end{array}
  \end{equation}
  where $p_1,\ldots,p_k \in \PEN[\vect{x}]$ is decidable.
\end{lemma}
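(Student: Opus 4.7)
The plan is to reduce \eqref{eq:decidable} to the satisfiability of a finite disjunction of Presburger formulas. By \Cref{thm:limits}, the inner statement
$\exists n_0 \in \NN.\ \forall n \in \NN_{>n_0}.\ p_i[\vect{x}/\vect{c}] > 0$
is equivalent to $\lim_{n \mapsto \infty} p_i[\vect{x}/\vect{c}] > 0$, and by \Cref{lem:asym} the sign of this limit equals the sign of $\base(\max_\succ(\coeffs(p_i[\vect{x}/\vect{c}])))$, i.e., the sign of the $\succ$-maximal addend whose rational coefficient still survives after substituting $\vect{c}$.

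Writing each $p_i$ in the form $p_i = \sum_{j=1}^{\ell_i} \alpha_{i,j} \cdot n^{a_{i,j}} \cdot b_{i,j}^n$ with $\alpha_{i,j} \in \Lin[\vect{x}]$, the condition ``$p_i[\vect{x}/\vect{c}] > 0$ for almost all $n$'' is therefore equivalent to the existence of an index $j_i^\star \in \{1,\ldots,\ell_i\}$ such that
(i) $\alpha_{i,j_i^\star}[\vect{x}/\vect{c}] > 0$, and
(ii) $\alpha_{i,j}[\vect{x}/\vect{c}] = 0$ for every $j$ with $(b_{i,j},a_{i,j}) >_{lex} (b_{i,j_i^\star},a_{i,j_i^\star})$.
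Strictly smaller addends can be ignored: by \Cref{lem:domination} their contribution is asymptotically dominated by the one indexed by $j_i^\star$ and hence cannot affect the sign of the limit.

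The decision procedure then enumerates all tuples $(j_1^\star,\ldots,j_k^\star) \in \prod_{i=1}^k \{1,\ldots,\ell_i\}$ -- a finite set -- and, for each such tuple, builds the conjunction of the strict inequalities from (i) and the equalities from (ii). Since every $\alpha_{i,j}$ is affine with rational coefficients, clearing denominators turns each such system into a quantifier-free Presburger formula in the integer unknowns $\vect{c} \in \ZZ^d$, whose satisfiability is decidable. The original formula \eqref{eq:decidable} holds iff at least one of the finitely many formulas obtained this way is satisfiable.

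The main point requiring care is the bookkeeping of how $\coeffs$ degenerates under substitution: because $\alpha_{i,j} \in \Lin[\vect{x}]$ can evaluate to $0$ for particular $\vect{c}$, the $\succ$-maximum of $\coeffs(p_i[\vect{x}/\vect{c}])$ is not determined by $p_i$ alone but by the combination of $p_i$ and $\vect{c}$. Enumerating the choices of $j_i^\star$ is exactly what captures this degeneracy, reducing the problem to finitely many ordinary Presburger satisfiability queries.
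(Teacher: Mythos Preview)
Your argument is correct and follows essentially the same route as the paper: both invoke \Cref{thm:limits} and \Cref{lem:asym} to reduce eventual positivity of each $p_i[\vect{x}/\vect{c}]$ to a sign condition on its $\succ$-maximal surviving coefficient, and then express this by a disjunction over the possible ``surviving'' indices, yielding a linear integer arithmetic formula. The only cosmetic difference is that the paper packages the per-$p_i$ disjunction into a single formula $\lia(p_i)$ and checks satisfiability of $\bigwedge_{i=1}^k \lia(p_i)$, whereas you distribute the conjunction over the disjunctions up front and enumerate the tuples $(j_1^\star,\ldots,j_k^\star)$ explicitly; these are logically the same formula.
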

\begin{proof}
For
any $p_i$ with $1 \leq i \leq k$
and any $\vect{c} \in \ZZ^{d}$, we have $p_i[\vect{x}/\vect{c}] \in \PEN$. Hence:
  \[
    \begin{array}{ll@{\qquad}l}
      & \exists  n_0 \in \NN.\ \forall n \in \NN_{>n_0}.\
      \bigwedge_{i=1}^k p_i[\vect{x}/\vect{c}] > 0\\
     \iff  &  \bigwedge_{i=1}^k \exists  n_0 \in \NN.\ \forall n \in \NN_{>n_0}.\
      p_i[\vect{x}/\vect{c}] > 0\\
    \iff  &  \bigwedge_{i=1}^k \lim_{n \mapsto \infty} p_i[\vect{x}/\vect{c}] > 0 & (\text{by  \Cref{thm:limits}})\\
            \iff & \bigwedge_{i=1}^k
            \base(\max_{\succ}(\coeffs(p_i[\vect{x}/\vect{c}]))) > 0 & (\text{by
              \Cref{lem:asym}})
      \end{array}
    \]
 Let $p \in \PEN[\vect{x}]$
 with $\coeffs(p) = \left\{\alpha_1^{(b_1,a_1)}\!,\ldots,\alpha^{(b_{\ell},a_{\ell})}_{\ell}\right\}$
 where $\alpha^{(b_i,a_i)}_i \succ \alpha^{(b_{j},a_{j})}_{j}$ for all $1 \leq i < j \leq
 \ell$.
 If
  $p[\vect{x}/\vect{c}] = 0$ holds, then $\coeffs(p[\vect{x}/\vect{c}]) = \{ 0^{(1,0)}
  \}$ and thus $\base(\max_{\succ}(\coeffs(p[\vect{x}/\vect{c}]))) = 0$. Otherwise,
there is an $1 \leq j \leq \ell$ with $\base(\max_{\succ}(\coeffs(p[\vect{x}/\vect{c}]))) = \alpha_j[\vect{x}/\vect{c}] \neq 0$ and we have
$\alpha_i[\vect{x}/\vect{c}] = 0$ for all $1 \leq i \leq j-1$.
 Hence, $\base(\max_{\succ}(\coeffs(p[\vect{x}/\vect{c}]))) > 0$ holds iff
 $\bigvee_{j=1}^\ell \left(\alpha_j[\vect{x}/\vect{c}] > 0 \land \bigwedge_{i=0}^{j-1}
 \alpha_i[\vect{x}/\vect{c}] = 0\right)$ holds, i.e., iff $[\vect{x}/\vect{c}]$ is a model for
       \begin{equation}
         \label{eq:lia0}
         \begin{array}{l} \lia(p) = \bigvee_{j=1}^\ell \left(\alpha_j > 0 \land \bigwedge_{i=0}^{j-1}
           \alpha_i = 0\right). \end{array}
       \end{equation}

    \noindent
    Hence by the considerations above,  \eqref{eq:decidable} is valid iff
    \begin{equation}
      \label{eq:qlia}
      \begin{array}{l}
      \exists \vect{c} \in \ZZ^{d}. \;
      \bigwedge_{i=1}^k \lia(p_i) [\vect{x}/\vect{c}]
      \end{array}
    \end{equation}
    is valid.
    By multiplying each (in-)equality in \eqref{eq:qlia}
    with the least common multiple of all denominators,
    one obtains a
    first order formula over the theory of linear integer arithmetic.
    It is well known that validity of such formulas is decidable.
  \end{proof}
\noindent
Note that \eqref{eq:qlia} is valid iff $\bigwedge_{i=1}^k \lia(p_i)$
is satisfiable. So to implement our decision procedure, one can use integer programming or
SMT solvers to check satisfiability of $\bigwedge_{i=1}^k \lia(p_i)$.
\Cref{lem:decidable} allows us to prove our main theorem.

\begin{theorem}
  \label{thm:decidable}
  Termination of triangular loops is decidable.
\end{theorem}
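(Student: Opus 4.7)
The plan is to chain together the machinery developed in the three preceding sections to reduce termination of an arbitrary triangular loop to a decidable question about linear integer arithmetic. First I would apply \Cref{thm:ptill} to reduce the problem to deciding termination of nnt-loops, so from now on it suffices to consider a loop \eqref{loop} whose update matrix $A$ is triangular with non-negative diagonal. By the standard equivalence between non-termination and eventual non-termination (observed just before \Cref{thm:term-closed-form}), it is enough to decide whether there is a witness $\vect{c} \in \ZZ^d$ for eventual non-termination.

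Next I would invoke \Cref{lem:closed} to compute a closed form $\vect{q} \in \PE[\vect{x}]^d$ for the $n$-fold update, and then apply \Cref{cor:normalized} to replace $\vect{q}$ by its normalized variant $\vect{q}_{norm} \in \PEN[\vect{x}]^d$. This step is crucial because it strips away all characteristic functions $\charfun{\psi}$ at the cost of only ignoring finitely many initial iterations, which does not affect eventual non-termination. The loop condition $\phi$ is by assumption a conjunction $\bigwedge_{i=1}^k \alpha_i > 0$ with $\alpha_i \in \Lin[\vect{x}]$, so after substituting $\vect{q}_{norm}$ for $\vect{x}$ in $\phi$ we obtain a conjunction $\bigwedge_{i=1}^k p_i > 0$ with $p_i \in \PEN[\vect{x}]$. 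By \Cref{cor:normalized}, eventual non-termination of the loop is equivalent to
\[
  \exists \vect{c} \in \ZZ^d, n_0 \in \NN.\ \forall n \in \NN_{>n_0}.\ \bigwedge_{i=1}^k p_i[\vect{x}/\vect{c}] > 0,
\]
which is exactly the form treated by \Cref{lem:decidable}.

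Finally, I would cite \Cref{lem:decidable} to conclude that the above formula is decidable, and therefore so is termination of the original triangular loop. There is no real obstacle here: all the work has been done in the previous sections, and the proof is essentially a bookkeeping argument that assembles \Cref{thm:ptill}, \Cref{lem:closed}, \Cref{cor:normalized}, and \Cref{lem:decidable} in order. If anything, the only point one must be careful about is making explicit the passage from non-termination to eventual non-termination before invoking the normalization step, since \Cref{cor:normalized} is phrased in terms of the latter.
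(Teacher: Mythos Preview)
Your proposal is correct and follows essentially the same route as the paper's own proof: reduce to nnt-loops via \Cref{thm:ptill}, compute and normalize the closed form via \Cref{lem:closed} and \Cref{cor:normalized}, rewrite $\phi[\vect{x}/\vect{q}_{norm}]$ as a conjunction of $\PEN[\vect{x}]$-inequalities, and finish with \Cref{lem:decidable}. The only (harmless) difference is that you make the passage through eventual non-termination explicit, whereas the paper absorbs that step into its citation of \Cref{cor:normalized}.
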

\begin{proof}
  \setcounter{auxctr}{\value{equation}}
  \setcounter{equation}{\value{eq-term-closed-form}}
  By \Cref{thm:ptill}, termination of triangular loops is decidable iff
  termination of nnt-loops is decidable. For an nnt-loop \eqref{loop}
  we obtain a $\vect{q}_{norm} \in \PEN[\vect{x}]^{d}$ (see \Cref{lem:closed} and \Cref{cor:normalized})
  such that \eqref{loop} is non-terminating iff
  \begin{equation}
    \label{eq:decidable2}
        \exists \vect{c} \in \ZZ^{d}, n_0 \in \NN.\ \forall n \in \NN_{>n_0}.\ \phi[\vect{x} /
      \vect{q}_{norm}][\vect{x} / \vect{c}],
  \end{equation}
  where $\phi$ is a conjunction of inequalities of the form $\alpha > 0$, $\alpha \in
  \Lin[\vect{x}]$. Hence,
\[\begin{array}{l}  \phi[\vect{x} /
      \vect{q}_{norm}][\vect{x} / \vect{c}] \; = \; \bigwedge_{i=1}^k
p_i[\vect{x}/\vect{c}] > 0
\end{array}\]
where $p_1,\ldots,p_k \in \PEN[\vect{x}]$. Thus,
by \Cref{lem:decidable}, validity of \eqref{eq:decidable2} is decidable.
\end{proof}

\noindent
The following algorithm summarizes our decision procedure.

\medskip
\algorithmstyle{boxruled}
\begin{algorithm}[H]
  \KwIn{a triangular loop \eqref{loop}}
  \KwResult{$\top$ if \eqref{loop} terminates, $\bot$ otherwise}
  \begin{itemize}[label=$\bullet$]
  \item apply \Cref{def:chaining} to \eqref{loop}, i.e.,\\
    \begin{itemize}
    \item[] $\phi \assign \phi \land \phi[\vect{x} / A\,\vect{x} + \vect{a}]$
    \item[] $A \assign A^2$
    \item[] $\vect{a} \assign A\,\vect{a} + \vect{a}$
    \end{itemize}
  \item $\vect{q} \assign \text{closed\_form}(\vect{x}, A, \vect{a})$ (cf.\ \Cref{alg:closed-form-vector})
  \item compute $\vect{q}_{norm}$ as in \Cref{cor:normalized}
  \item compute $\phi[\vect{x}/\vect{q}_{norm}] = \bigwedge_{i=1}^k p_i > 0$
  \item compute $\varphi = \bigwedge_{i=1}^k \lia(p_i)$ (cf.\ \eqref{eq:lia0})
  \item \leIf{$\varphi$ is satisfiable}{\Return $\bot$}{\Return $\top$}
  \end{itemize}\vspace{-1em}
  \caption{Deciding Termination of Triangular Loops}
  \label{alg:deciding}
\end{algorithm}
\medskip

\setcounter{equation}{\value{auxctr}}
\begin{ex}
  In \Cref{ex:marked} we showed that \Cref{ex}
  is non-ter\-mi\-nat\-ing iff
  \[
    \textstyle
    \exists w,x,y,z \in \ZZ,\ n_0 \in \NN.\ \forall n \in \NN_{>n_0}.\ p^{\phi}_1 > 0 \land p^{\phi}_2 > 0
  \]
  is valid.
   This is the case iff $\lia(p_1) \land \lia(p_2)$, i.e.\medskip,

  \noindent\medskip
  \resizebox{\textwidth}{!}{
    $
    \renewcommand{\arraystretch}{1}
    \begin{array}{c}
      y - \frac{1}{3} + \frac{1}{2} {\cdot} w > 0 \lor 2 > 0 \land y - \frac{1}{3} + \frac{1}{2} {\cdot} w = 0 \lor x-\frac{5}{3} > 0 \land 2 = 0 \land y - \frac{1}{3} + \frac{1}{2} {\cdot} w = 0\\
      \land\\
      \frac{2}{3} - 2 {\cdot} y - w > 0 \lor 2 > 0 \land \frac{2}{3} - 2 {\cdot} y - w = 0 \lor x-\frac{2}{3} > 0 \land 2 = 0 \land \frac{2}{3} - 2 {\cdot} y - w = 0
    \end{array}
    $
  }

  \noindent
  is satisfiable. This formula is equivalent to $6 \cdot y - 2 + 3 \cdot w = 0$ which does not have any
    integer solutions. Hence,
 the loop of
  \Cref{ex} terminates.
\end{ex}

\Cref{ex:characterizing-non-terminating-inputs} shows that our technique does not yield witnesses for
non-termination, but it only proves the existence of a witness for
\emph{eventual} non-termination. While such a witness can be transformed into a
witness for non-termination by applying the loop several times, it is unclear
how often the loop needs to be applied.

\begin{ex}\label{Characterizing Non-Terminating Inputs}
  \label{ex:characterizing-non-terminating-inputs}
  Consider the following non-terminating loop:
  \algorithmstyle{plain}
  \algeq{eq:nt-loop}{
    \lWhile{$x > 0$}{
      $
      \begin{bmatrix}
        x\\
        y
      \end{bmatrix}
      \assign
      \begin{bmatrix}
        x + y \\
        1
      \end{bmatrix}
      $
    }
  }
  The closed form of $x$ is $q = \charfun{n = 0} \cdot x + \charfun{n \neq 0} \cdot (x + y
  + n - 1)$. Replacing $x$ with\linebreak $q_{norm}$ in $x > 0$ yields $x + y + n - 1 > 0$.
  The maximal marked coefficient of $x + y + n - 1$ is $1^{(1,1)}$.
  So by  \Cref{alg:deciding},
    \eqref{eq:nt-loop} does not terminate if $\exists x,y \in \ZZ.\ 1 > 0$ is valid.
  While $1 > 0$ is a tautology, \eqref{eq:nt-loop} terminates if $x \leq 0$ or $x \leq -y$.
   \end{ex}

However, the final formula constructed
by \Cref{alg:deciding} precisely describes
all witnesses for eventual non-termination\report{ (see \Cref{lem:witnesses_proof} for the proof)}.

\begin{lemma}[Witnessing Eventual Non-Termination]
  \label{lem:witnesses}
  Let \eqref{loop} be a triangular loop, let $\vect{q}_{norm}$ be the normalized closed form of \eqref{chained},
  and let
  \[
    \textstyle
    \left(\phi \land \phi[\vect{x} / A\,\vect{x} + \vect{a}]\right)[\vect{x}/\vect{q}_{norm}] = \bigwedge_{i=1}^k p_i > 0.
  \]
  Then $\vect{c} \in \ZZ^d$ witnesses eventual non-termination of \eqref{loop}
  iff $[\vect{x}/\vect{c}]$ is a model for
   \[
    \textstyle
    \bigwedge_{i=1}^k \lia(p_i).
  \]
\end{lemma}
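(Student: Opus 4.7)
The plan is to chain together the equivalences already established in the paper, being careful that each transition preserves the specific witness $\vect{c}$. First, I would show that $\vect{c}$ witnesses eventual non-termination of \eqref{loop} iff $\vect{c}$ witnesses eventual non-termination of \eqref{chained}. Second, by \Cref{thm:term-closed-form} applied to \eqref{chained}, the latter is equivalent to
\[
  \exists n_0 \in \NN.\ \forall n \in \NN_{>n_0}.\ (\phi \land \phi[\vect{x}/A\vect{x}+\vect{a}])[\vect{x}/\vect{q}][\vect{x}/\vect{c}],
\]
where $\vect{q}$ is the closed form of \eqref{chained}. Third, by \Cref{cor:normalized} I may replace $\vect{q}$ with $\vect{q}_{norm}$, obtaining $\exists n_0 \in \NN.\ \forall n \in \NN_{>n_0}.\ \bigwedge_{i=1}^k p_i[\vect{x}/\vect{c}] > 0$. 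Finally, the argument inside the proof of \Cref{lem:decidable}, applied for this fixed $\vect{c}$, converts the last condition into the statement that $[\vect{x}/\vect{c}]$ is a model for $\bigwedge_{i=1}^k \lia(p_i)$.

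For the first step I would essentially repeat the calculation in the proof of \Cref{lem:chaining-complete}, but with the prefix $\exists n_0 \in \NN.\ \forall n \in \NN_{>n_0}$ in place of $\forall n \in \NN$. If $\vect{c}$ is an eventual non-termination witness for \eqref{loop} with threshold $n_0$, then for $n_0' = \lceil n_0/2 \rceil$ and every $n > n_0'$ both indices $2n$ and $2n+1$ exceed $n_0$, so $\phi[\vect{x}/f^{2n}(\vect{c})] \land \phi[\vect{x}/f^{2n+1}(\vect{c})]$ holds; rewriting $f^{2n+1}(\vect{c}) = A\,f^{2n}(\vect{c}) + \vect{a}$ shows that $\vect{c}$ is an eventual non-termination witness for \eqref{chained}. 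Conversely, a threshold $n_0'$ for \eqref{chained} yields the threshold $n_0 = 2 n_0' + 1$ for \eqref{loop}. Crucially, the witness $\vect{c}$ stays fixed on both directions.

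For the last step, the proof of \Cref{lem:decidable} already supplies, for each fixed $\vect{c}$, the chain of equivalences
\[
  \exists n_0.\ \forall n > n_0.\ p_i[\vect{x}/\vect{c}] > 0 \iff \lim_{n \to \infty} p_i[\vect{x}/\vect{c}] > 0 \iff [\vect{x}/\vect{c}] \text{ is a model for } \lia(p_i)
\]
via \Cref{thm:limits} and \Cref{lem:asym}. Distributing $\exists n_0.\ \forall n > n_0$ over the finite conjunction $\bigwedge_{i=1}^k$ (taking the maximum of the individual thresholds in one direction) then yields the lemma. I expect the main subtlety to lie in the first step: \Cref{lem:chaining-complete} is stated as an existential equivalence about non-termination, so it cannot be invoked as a black box to conclude a per-witness statement about eventual non-termination, and one must re-examine its proof with the extra quantifier $\exists n_0$ carried along.
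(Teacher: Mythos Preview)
Your proposal is correct and follows essentially the same route as the paper: both thread together the per-$\vect{c}$ versions of the chaining equivalence, \Cref{cor:normalized}, and the $\lia$ reduction from the proof of \Cref{lem:decidable} (via \Cref{thm:limits} and \Cref{lem:asym}). The only organizational difference is that you front-load the per-witness passage between \eqref{loop} and \eqref{chained} as an explicit first step, whereas the paper handles it at the end by rewriting the second conjunct of the chained condition as a shift in $n$; your worry that \Cref{lem:chaining-complete} cannot be black-boxed here is exactly the point both proofs must address.
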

\report{\makeproof{lem:witnesses}{
  We have:
  \[
    \begin{array}{ll}
      &\bigwedge_{i=1}^k \lia(p_i) \,  [\vect{x}/\vect{c}]\\
      \iff&\bigwedge_{i=1}^k \lia(p_i[\vect{x}/\vect{c}])\\
      \iff&\bigwedge_{i=1}^k \base(\max_{\succ}(\coeffs(p_i[\vect{x}/\vect{c}]))) > 0\\
      \iff&\bigwedge_{i=1}^k \lim_{n \mapsto \infty} p_i[\vect{x}/\vect{c}] > 0 \hfill \text{(by  \Cref{lem:asym})}\\
      \iff&\bigwedge_{i=1}^k \exists  n_0 \in \NN.\ \forall n \in
      \NN_{>n_0}.\ p_i[\vect{x}/\vect{c}] > 0 \hfill
      \text{(by  \Cref{thm:limits})}\\
 \iff&\exists  n_0 \in \NN.\ \forall n \in
 \NN_{>n_0}.\
       \left(\phi \land \phi[\vect{x} / A\,\vect{x} + \vect{a}]\right)[\vect{x}/\vect{q}_{norm}]  [\vect{x}/\vect{c}]\\
      \iff&\exists  n_0 \in \NN.\ \forall n \in  \NN_{>n_0}.\ \phi[\vect{x}/\vect{q}_{norm}]  [\vect{x}/\vect{c}] \land \phi[\vect{x} / A\,\vect{x} + \vect{a}][\vect{x}/\vect{q}_{norm}]  [\vect{x}/\vect{c}]\\
      \iff&\exists  n_0 \in \NN.\ \forall n \in  \NN_{>n_0}.\ \phi[\vect{x}/\vect{q}_{norm}]  [\vect{x}/\vect{c}] \land \phi[\vect{x}/\vect{q}_{norm}[n/n+1]]  [\vect{x}/\vect{c}]
          \end{array}
    \] \pagebreak
     \[
     \begin{array}{ll}
         & \hfill (\text{as } \vect{q}_{norm}[n/n+1] = \vect{q}[n/n+1] = f(\vect{q}) = A\,\vect{q} + \vect{a} = A\,\vect{q}_{norm} + \vect{a}\\
      & \hfill \text{for large enough } n)\\
      \iff&\exists  n_0 \in \NN.\ \forall n \in  \NN_{>n_0}.\ \phi[\vect{x}/\vect{q}_{norm}]  [\vect{x}/\vect{c}]\\
      \iff& \vect{c} \text{ witnesses eventual non-termination of \eqref{loop}} \hfill \text{(by  \Cref{cor:normalized})}
    \end{array}
  \]
}}

\section{Conclusion}

We presented a decision procedure for termination of affine integer loops with
triangular update matrices. In this way, we contribute to the ongoing challenge
of proving the 15 years old conjecture by Tiwari \cite{DBLP:conf/cav/Tiwari04} that termination of affine integer loops is
decidable.
After linear
loops \cite{DBLP:conf/cav/Braverman06}, loops with at most $4$
variables \cite{DBLP:conf/soda/OuakninePW15}, and loops with diagonalizable
update matrices \cite{Bozga14,DBLP:conf/soda/OuakninePW15}, triangular loops are the
fourth important special case where decidability could be proven.

The key idea of our decision procedure is to compute \emph{closed forms} for the
values of the program variables after a symbolic number of iterations $n$. While
these closed forms are rather complex, it turns out that reasoning about first-order formulas over the theory of linear integer arithmetic suffices  to analyze their behavior for large $n$. This allows us to
reduce (non-)termination of triangular loops to integer programming.
  In future work, we plan to investigate generalizations of our approach
to other classes of integer loops.

\bibliographystyle{splncs03}
{\normalsize \bibliography{ms}}

 \report{
 \newpage

 \section*{\huge Appendix}

 \appendix

 \section{Example \ref{ex:sum} -- Missing Steps}

\label{ex:sum-full}

\begin{ex}
  We complete \Cref{ex:sum} by showing how to compute poly-exponential expressions $q_1$ and $q_3$ that are equivalent to $p_1 = \sum_{i=1}^{n}
  \charfun{i-1=0} \cdot 4^{n-i} \cdot 2 \cdot w$ and $p_3 =
  \sum_{i=1}^{n} 4^{n-i} \cdot (- 2)$, respectively.
  For $p_1$, according to \eqref{eq:positive-literal} we have
  \[
    \begin{array}{lll}
      &&       \sum_{i=1}^{n}
         \charfun{i-1=0} \cdot 4^{n-i} \cdot 2 \cdot w\\
      &=& \sum_{i=1}^{n} \charfun{n > i-1} \cdot
          \charfun{i-1=0} \cdot 4^{n-i} \cdot 2 \cdot w\\
      &=& \charfun{n \neq 0} \cdot 4^{n-1} \cdot
          2 \cdot w\\
      &=& \charfun{n \neq 0} \cdot \tfrac{1}{2} \cdot w \cdot 4^{n}\\
      &=& q_1.
    \end{array}\]
  For $p_3$, according to \eqref{eq:negative-3} we obtain
  \[
    \begin{array}{lll}
      &&      \sum_{i=1}^{n} 4^{n-i} \cdot (- 2)\\
      &=&  (- 2) \cdot 4^n \cdot \sum_{i=1}^{n} \left(\tfrac{1}{4}\right)^i\\
      &=&  (- 2) \cdot 4^n \cdot \sum_{i=1}^{n} \left(-\tfrac{1}{3} - 4 \cdot
          \left(-\tfrac{1}{3}\right)\right) \cdot \left(\tfrac{1}{4}\right)^i\quad \text{(thus, $r =
          -\tfrac{1}{3}$, cf.\ \Cref{ex:PolynomialDifference})}\\
      &=&  (- 2) \cdot 4^n \cdot \left(
          \sum_{i=1}^{n} \left(-\tfrac{1}{3}\right)  \cdot \left(\tfrac{1}{4}\right)^i \; - \;
          \sum_{i=1}^{n} 4 \cdot \left(-\tfrac{1}{3}\right)  \cdot \left(\tfrac{1}{4}\right)^i
          \right)\\
      &=&  (- 2) \cdot 4^n \cdot \left(
          \sum_{i=1}^{n} \left(-\tfrac{1}{3}\right)  \cdot \left(\tfrac{1}{4}\right)^i \; - \;
          \sum_{i=0}^{n-1}  \left(-\tfrac{1}{3}\right)  \cdot \left(\tfrac{1}{4}\right)^i
          \right)\\
      &=&  (- 2) \cdot 4^n \cdot
          \charfun{n \neq 0} \cdot
          \left(\left(-\tfrac{1}{3}\right)  \cdot \left(\tfrac{1}{4}\right)^n \; - \;
          \left(-\tfrac{1}{3}\right)
          \right)\\
      &=&
          \charfun{n \neq 0} \cdot
          \left(\tfrac{2}{3}  -
          \tfrac{2}{3} \cdot 4^n\right)\\
      &=&  \tfrac{2}{3}  -
          \tfrac{2}{3} \cdot 4^n\\
      &=& q_3.
    \end{array}\]
\end{ex}

\section{From Non-Termination to Eventual Non-Termination}
\label{app:non-term}

\begin{lemma}
  \eqref{loop} is non-terminating $\iff$ \eqref{loop} is eventually non-ter\-mi\-nat\-ing.
\end{lemma}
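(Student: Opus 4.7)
The plan is to prove both directions of the equivalence directly from \Cref{def:term} and \Cref{def:non-term}. The forward direction is essentially immediate: if $\vect{c}$ witnesses non-termination, i.e., $\phi[\vect{x}/f^n(\vect{c})]$ holds for every $n \in \NN$, then choosing $n_0 = 0$ shows that $\vect{c}$ also witnesses eventual non-termination, since $\phi[\vect{x}/f^n(\vect{c})]$ in particular holds for every $n > 0$.

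For the backward direction, the key idea is to shift the starting point of the iteration. Suppose $\vect{c}$ witnesses eventual non-termination with threshold $n_0$, meaning $\phi[\vect{x}/f^n(\vect{c})]$ holds for all $n > n_0$. I would then define $\vect{c}' = f^{n_0+1}(\vect{c})$ and show that $\vect{c}'$ witnesses non-termination. This uses the identity $f^m(\vect{c}') = f^{m + n_0 + 1}(\vect{c})$, which follows by a straightforward induction on $m$ from the definition $f^{n+1}(\vect{x}) = f(f^n(\vect{x}))$. Since $m + n_0 + 1 > n_0$ for every $m \in \NN$, the eventual non-termination assumption yields $\phi[\vect{x}/f^m(\vect{c}')]$ for all $m \in \NN$, which is precisely the defining condition for $\vect{c}'$ being a witness for non-termination.

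There is no real obstacle here: the nontrivial direction is a one-line shift argument, and the only thing to be careful about is that $\vect{c}' \in \ZZ^d$, which holds because $f$ maps $\ZZ^d$ to $\ZZ^d$ (as $A \in \ZZ^{d \times d}$ and $\vect{a} \in \ZZ^d$). Thus applying $f$ finitely many times to an integer vector produces an integer vector, and $\vect{c}'$ is a genuine witness in the sense of \Cref{def:term}. The whole proof fits comfortably in a short displayed chain of equivalences, and no additional machinery beyond the two definitions is required.
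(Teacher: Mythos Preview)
Your proposal is correct and matches the paper's own proof essentially line for line: the forward direction is handled by taking $n_0 = 0$, and the backward direction by setting $\vect{c}' = f^{n_0+1}(\vect{c})$ and observing that $\phi[\vect{x}/f^m(\vect{c}')]$ holds for all $m \in \NN$. Your additional remarks on the identity $f^m(\vect{c}') = f^{m+n_0+1}(\vect{c})$ and on $\vect{c}' \in \ZZ^d$ are sound and make explicit two small points that the paper leaves implicit.
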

\begin{proof}
  The ``only if'' direction is trivial by choosing $n_0 = 0$ in \Cref{def:non-term}.
  For the ``if'' direction, assume that \eqref{loop} is eventually non-terminating, i.e., there are $\vect{c} \in \ZZ^{d}, n_0 \in \NN$ such that
  \[
    \forall n \in \NN_{>n_0}.\ \phi[\vect{x} / f^{n}(\vect{c})],
  \]
  cf.\ \Cref{def:non-term}.
  Let $\vect{c}' = f^{n_0 + 1}(\vect{c})$.
  Then
  \[
    \forall n \in \NN.\ \phi[\vect{x} / f^n(\vect{c}')],
  \]
  i.e., \eqref{loop} is non-terminating, see \Cref{def:term}.
\end{proof}

 }
\end{document}